\documentclass[a4paper]{extarticle}

\usepackage{hyperref}
\usepackage{fullpage}

\title{Space-Efficient Algorithms for {Longest Increasing Subsequence}%
\footnote{Partially supported by MEXT KAKENHI grant number 24106004. 
}}

\author{Masashi~Kiyomi\thanks{Yokohama City University. Yokohama, Japan. \texttt{masashi@yokohama-cu.ac.jp}} \and
Hirotaka~Ono\thanks{Nagoya University. Nagoya, Japan. \texttt{ono@.nagoya-u.jp}} \and
Yota~Otachi\thanks{Kumamoto University. Kumamoto, Japan. \texttt{otachi@cs.kumamoto-u.ac.jp}} \and
Pascal~Schweitzer\thanks{TU Kaiserslautern. Kaiserslautern, Germany. \texttt{schweitzer@cs.uni-kl.de}} \and
Jun~Tarui\thanks{The University of Electro-Communications. Chofu, Japan.~\texttt{tarui@ice.uec.ac.jp}}}

\newcommand{\seq}[1]{\langle #1 \rangle}
\newcommand{\subseq}[1]{[#1]}

\newcommand{\sseq}{\tau} 

\newcommand{\PS}{\textsc{Patience Sorting}} 
\newcommand{\RPS}{\textsc{Reverse Patience Sorting}} 

\usepackage{amsthm}
\theoremstyle{plain}
\newtheorem{theorem}{Theorem}[section]
\newtheorem{lemma}[theorem]{Lemma}
\newtheorem{proposition}[theorem]{Proposition}
\newtheorem{observation}[theorem]{Observation}
\newtheorem{corollary}[theorem]{Corollary}

\usepackage[table,xcdraw]{xcolor}

\usepackage{color}
\definecolor{lightblue}{rgb}{0.5,0.5,1.0}
\definecolor{darkred}{rgb}{0.8,0,0}
\definecolor{darkgreen}{rgb}{0,0.5,0}
\definecolor{darkblue}{rgb}{0,0,0.5}

\definecolor{vr}{rgb}{0.78,0.08,0.52} 

\usepackage{algorithm}
\usepackage[noend]{algpseudocode}

\renewcommand{\mid}{:}

\newcommand{\lis}{\mathsf{lis}}

\begin{document}

\maketitle

\begin{abstract}
Given a sequence of integers, we want to find a longest increasing subsequence of the sequence.
It is known that this problem can be solved in $O(n \log n)$ time and space.
Our goal in this paper is to reduce the space consumption while keeping the time complexity small.
For $\sqrt{n} \le s \le n$, we present algorithms that use $O(s \log n)$ bits and 
$O(\frac{1}{s} \cdot n^{2} \cdot \log n)$ time for computing the length of a longest increasing subsequence,
and $O(\frac{1}{s} \cdot n^{2} \cdot \log^{2} n)$ time for finding an actual subsequence.
We also show that the time complexity of our algorithms is optimal up to polylogarithmic factors
in the framework of sequential access algorithms with the prescribed amount of space.
\end{abstract}


\section{Introduction}

Given a sequence of integers (possibly with repetitions),
the problem of finding a longest increasing subsequence (LIS, for short) is 
a classic problem in computer science which has many application areas including bioinfomatics and physics
(see \cite{SunW2007communication} and the references therein).
It is known that LIS admits an $O(n \log n)$-time algorithm that uses
$O(n \log n)$ bits of working space~\cite{Schensted1961longest,Fredman1975,AldousP1999longest},
where $n$ is the length of the sequence.

A wide-spread algorithm achieving these bounds is {\PS}, devised by Mallows~\cite{Mallows1962ps,Mallows1963,Mallows1973ps}.
Given a sequence of length $n$, {\PS} partitions
the elements of the sequence into so-called \emph{piles}.
It can be shown that the number of piles coincides with the length of a longest increasing subsequence
(see Section~\ref{sec:ps} for details).
Combinatorial and statistical properties of the piles in {\PS} are well studied 
(see \cite{AldousP1999longest,BursteinL2006combinatorics,Romik2015surprising}).

However, with the dramatic increase of the typical data sizes in applications over the last decade, a main memory consumption in the order of $\Theta(n \log n)$ bits is excessive in many algorithmic contexts, especially for basic subroutines such as LIS.
We therefore investigate the existence of space-efficient algorithms for LIS.\@


\subparagraph{Our results}

In this paper, we present the first space-efficient algorithms for LIS that are exact.
We start by observing that
when the input is restricted to permutations, an algorithm using $O(n)$ bits can be obtained
straightforwardly by modifying a previously known algorithm (see Section~\ref{ssec:n-bits-algorithm}).
Next, we observe that a Savitch type algorithm~\cite{Savitch1970relationships} for this problem uses 
$O(\log^{2} n)$ bits and thus runs in quasipolynomial time.
However, we are mainly interested in space-efficient algorithms that also behave well with regard to running time.
To this end we develop an algorithm that determines the length of a longest increasing subsequence using $O(\sqrt{n} \log n)$ bits which runs in $O(n^{1.5} \log n)$ time. Since the constants hidden in the~O-notation are negligible, 
the algorithm, when executed in the main memory of a standard computer, may handle a peta-byte input on external storage.

More versatile, in fact, our space-efficient algorithm is \emph{memory-adjustable} 
in the following sense. (See~\cite{AsanoEK13priority} for information on memory-adjustable algorithms.)
When a memory bound~$s$ with $\sqrt{n} \le s \le n$ is given to the algorithm, it computes with $O(s \log n)$ bits of working space 
in $O(\frac{1}{s} \cdot n^{2} \log n)$ time the length of a longest increasing subsequence.
When $s = n$ our algorithm is equivalent to the previously known algorithms mentioned above.
When $s = \sqrt{n}$ it uses, as claimed above, $O(\sqrt{n} \log n)$ bits and runs in $O(n^{1.5} \log n)$ time.

The algorithm only determines the length of a longest increasing subsequence. 
To actually find such a longest increasing subsequence, one can run the length-determining algorithm $n$ times to successively construct the sought-after subsequence. 
This would give us a running time of~$O(\frac{1}{s}\cdot n^{3} \log n)$. However, we show that one can do much better, achieving a running time of $O(\frac{1}{s} \cdot n^{2} \log^{2} n)$ 
without any increase in space complexity,
by recursively finding a \emph{near-mid} element of a longest increasing subsequence.

To design the algorithms, we study the structure of the piles arising in {\PS} in depth and show that
maintaining certain information regarding the piles suffices to simulate the algorithm. 
Roughly speaking, our algorithm divides the execution of {\PS} into $O(n/s)$ phases,
and in each phase it computes in $O(n \log n)$ time information on the next $O(s)$ piles, while forgetting previous information.

Finally, we complement our algorithm with a lower bound in a restricted computational model. 
In the \emph{sequential access model}, an algorithm can access the input only sequentially.
We also consider further restricted algorithms in the \emph{multi-pass model},
where an algorithm has to read the input sequentially from left to right
and can repeat this multiple (not necessarily a constant number of) times.
Our algorithm for the length works within the multi-pass model,
while the one for finding a subsequence is a sequential access algorithm.
Such algorithms are useful when large data is placed in an external storage
that supports efficient sequential access.
We show that the time complexity of our algorithms
is optimal up to polylogarithmic factors in these models.


\subparagraph{Related work}
The problem of finding a longest increasing subsequence (LIS) is 
among the most basic algorithmic problems on integer arrays and has been studied continuously since the early
1960's.
It is known that LIS can be solved in $O(n \log n)$ time and space~\cite{Schensted1961longest,Fredman1975,AldousP1999longest},
and that any comparison-based algorithm needs $\Omega(n \log n)$ comparisons
even for computing the length of a longest increasing subsequence~\cite{Fredman1975,Ramanan1997tight}.
For the special case of LIS where the input is restricted to permutations,
there are $O(n \log\log n)$-time algorithms~\cite{HuntS1977fast,BespamyatnikhS2000enumerating,CrochemoreP2010fast}.
{\PS}, an efficient algorithm for LIS, has been a research topic
in itself, especially in the context of Young tableaux~\cite{Mallows1962ps,Mallows1963,Mallows1973ps,AldousP1999longest,BursteinL2006combinatorics,Romik2015surprising}.

Recently, LIS has been studied intensively in the data-streaming model,
where the input can be read only once (or a constant number of times) sequentially from left to right.
This line of research was initiated by Liben-Nowell, Vee, and Zhu~\cite{Liben-NowellVZ2006lis},
who presented an exact one-pass algorithm and a lower bound for such algorithms.
Their results were then improved and extended by many other groups~\cite{GopalanJKK2007estimating,
SunW2007communication,GalG2010lower,SaksS2013space,
ErgunJ2015monotonicity,NaumovitzS2015polylogarithmic,SaksS2017estimating}.
These results give a deep understanding on streaming algorithms with a constant number of passes
even under the settings with randomization and approximation.
(For details on these models,
see the very recent paper by Saks and Seshadhri~\cite{SaksS2017estimating} and the references therein.)
On the other hand, multi-pass algorithms with a non-constant number of passes have not been studied for LIS.

While space-limited algorithms on both RAM and multi-pass models
for basic problems have been studied since the early stage of algorithm theory,
research in this field has recently intensified.
Besides LIS, other frequently studied problems include sorting and 
selection~\cite{MunroP1980selection,BorodinC1982time,Frederickson1987upper,PagterR1998optimal},
graph searching~\cite{AsanoIKKOOSTU2014depth,ElmasryHK2015space,PilipczukW2016space,ChakrabortyS2017space}, 
geometric computation~\cite{ChanC2007multi,DarwishE2014optimal,BanyassadyKMvRRSS2017improved,AhnBOS2017time}, and
{$k$-SUM}~\cite{Wang14random,LincolnWWW16deterministic}.


\section{Preliminaries}

Let $\sseq = \seq{\sseq(1),\sseq(2),\ldots,\sseq(n)}$ be a sequence of $n$ integers possibly with repetitions.
For $1 \le i_{1} < \dots < i_{\ell} \le n$,
the \emph{subsequence} $\sseq\subseq{i_{1}, \dots, i_{\ell}}$ of $\sseq$
is the sequence $\seq{\sseq(i_{1}), \dots, \sseq(i_{\ell})}$.
A subsequence $\sseq\subseq{i_{1}, \dots, i_{\ell}}$ is an \emph{increasing subsequence} of $\sseq$
if $\sseq(i_{1}) < \dots < \sseq(i_{\ell})$. 
If $\sseq(i_{1}) \le \dots \le \sseq(i_{\ell})$, then the sequence $\sseq$ is \emph{non-decreasing}.
We analogously define \emph{decreasing subsequences} and \emph{non-increasing subsequences}.
By $\lis(\sseq)$, we denote the length of a longest increasing subsequence of $\sseq$.

For example, consider a sequence $\sseq_{1} = \seq{2, 8, 4, 9, 5, 1, 7, 6, 3}$.
It has an increasing subsequence $\sseq_{1}\subseq{1,3,5,8} = \seq{2, 4, 5, 6}$.
Since there is no increasing subsequence of $\sseq_{1}$ with length~5 or more, we have $\lis(\sseq_{1}) = 4$.

In the computational model in this paper, we use the RAM model with the following restrictions that are standard in the context of sublinear space algorithms.
The input is in a read-only memory and the output must be produced on a write-only memory.
We can use an additional memory that is readable and writable.
Our goal is to minimize the size of the additional memory
while keeping the running time fast.
We measure space consumption in the number of bits used (instead of words)
within the additional memory.


\section{\PS}
\label{sec:ps}

Since our algorithms are based on the classic \PS,
we start by describing it in detail and recalling some important properties regarding its internal configurations.

Internally, the algorithm maintains a collection of piles.
A \emph{pile} is a stack of integers.
It is equipped with the procedures push and top:
the push procedure appends a new element to become the new top of the pile; and
the top procedure simply returns the element on top of the pile, which is always the one that was added last.

We describe how {\PS} computes $\lis(\sseq)$.
See Algorithm~\ref{alg:patience-sorting}.
The algorithm scans the input $\sseq$ from left to right (Line~\ref{alg:ps:for-loop}).
It tries to push each newly read element $\sseq(i)$ to a pile with a top element larger than or equal to $\sseq(i)$.
If on the one hand there is no such a pile, {\PS} creates a new pile to which it pushes $\sseq(i)$ (Line~\ref{alg:ps:new-pile}).
On the other hand, if at least one such pile exists, {\PS} pushes $\sseq(i)$ to the oldest pile that satisfies the property (Line~\ref{alg:ps:oldest-pile}).
After the scan, the number of piles is the output, which happens to be equal to $\lis(\sseq)$ (Line~\ref{alg:ps:return}).

\begin{algorithm}
  \caption{{\PS}}
  \label{alg:patience-sorting}
  \begin{algorithmic}[1]
    \State set $\ell := 0$ and initialize the dummy pile $P_{0}$ with the single element $-\infty$
    \For{$i=1$ \textbf{to} $n$} \label{alg:ps:for-loop}
        \If{$\sseq(i) > \mathtt{top}(P_{\ell})$}
            \State increment $\ell$, let $P_{\ell}$ be a new empty pile, and set $j := \ell$ \label{alg:ps:new-pile}
	\Else
	    \State set $j$ to be the smallest index with $\sseq(i) \le \mathtt{top}(P_{j})$ \label{alg:ps:oldest-pile}
	\EndIf
	\State push $\sseq(i)$ to $P_{j}$
    \EndFor
    \State \Return $\ell$ \label{alg:ps:return}
  \end{algorithmic}
\end{algorithm}

We return to the sequence $\sseq_{1} = \seq{2, 8, 4, 9, 5, 1, 7, 6, 3}$ for an example.
The following illustration shows the execution of Algorithm~\ref{alg:patience-sorting} on $\sseq_{1}$. In each step the bold number is the newly added element.
The colored (and underlined) elements in the final piles form a longest increasing subsequence $\sseq_{1}\subseq{1,3,5,8} = \seq{2, 4, 5, 6}$,
which can be extracted as described below.

\smallskip

{
  \small
  \tabcolsep=0.5mm
  \noindent
  \begin{tabular}{c}
    \\
    \\
    \textbf{2} \\
    \hline
    $P_{1}$
  \end{tabular}
  \hfill
  \begin{tabular}{cc}
    \\
    &\\
    2 & \textbf{8} \\
    \hline
    $P_{1}$ & $P_{2}$
  \end{tabular}
  \hfill
  \begin{tabular}{cc}
    \\
    & \textbf{4} \\
    2 & 8 \\
    \hline
    $P_{1}$ & $P_{2}$
  \end{tabular}
  \hfill
  \begin{tabular}{ccc}
    \\
      & 4 & \\
    2 & 8 & \textbf{9} \\
    \hline
    $P_{1}$ & $P_{2}$ & $P_{3}$
  \end{tabular}
  \hfill
  \begin{tabular}{ccc}
    \\
      & 4 & \textbf{5} \\
    2 & 8 & 9 \\
    \hline
    $P_{1}$ & $P_{2}$ & $P_{3}$
  \end{tabular}
  \hfill
  \begin{tabular}{ccc}
    \\
    \textbf{1} & 4 & 5 \\
    2 & 8 & 9 \\
    \hline
    $P_{1}$ & $P_{2}$ & $P_{3}$
  \end{tabular}
  \hfill
  \begin{tabular}{cccc}
    \\
    1 & 4 & 5 & \\
    2 & 8 & 9 & \textbf{7} \\
    \hline
    $P_{1}$ & $P_{2}$ & $P_{3}$ & $P_{4}$
  \end{tabular}
  \hfill
  \begin{tabular}{cccc}
    \\
    1 & 4 & 5 & \textbf{6} \\
    2 & 8 & 9 & 7 \\
    \hline
    $P_{1}$ & $P_{2}$ & $P_{3}$ & $P_{4}$
  \end{tabular}
  \hfill
  \begin{tabular}{cccc}
      & \textbf{3} &   &   \\
    1 & \textcolor{vr}{\underline{4}} & \textcolor{vr}{\underline{5}} & \textcolor{vr}{\underline{6}} \\
    \textcolor{vr}{\underline{2}} & 8 & 9 & 7 \\
    \hline
    $P_{1}$ & $P_{2}$ & $P_{3}$ & $P_{4}$
  \end{tabular}
}

\begin{proposition}
[\cite{Schensted1961longest,Fredman1975,AldousP1999longest}]
Given a sequence $\sseq$ of length $n$,
{\PS} computes $\lis(\sseq)$ in $O(n \log n)$ time using $O(n \log n)$ bits of working space.
\end{proposition}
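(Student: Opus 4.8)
The plan is to establish three separate claims: correctness (the number of piles equals $\lis(\sseq)$), the time bound, and the space bound.

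\textbf{Correctness.} First I would argue that the tops of the piles $\mathtt{top}(P_1), \mathtt{top}(P_2), \ldots, \mathtt{top}(P_\ell)$ are always \emph{strictly increasing} during the entire execution, and that each individual pile is non-increasing from bottom to top. Both are easy inductions on the steps of the \texttt{for} loop: when $\sseq(i)$ is pushed onto $P_j$, it is placed either on a new pile (Line~\ref{alg:ps:new-pile}, where $\sseq(i) > \mathtt{top}(P_\ell)$ preserves the strictly-increasing-tops invariant) or onto the oldest pile $P_j$ with $\sseq(i) \le \mathtt{top}(P_j)$ (Line~\ref{alg:ps:oldest-pile}); in the latter case $\sseq(i) > \mathtt{top}(P_{j-1})$ by minimality of $j$, so the tops stay strictly increasing, and $\sseq(i) \le \mathtt{top}(P_j)$ keeps $P_j$ non-increasing. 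Next, $\lis(\sseq) \le \ell$: any increasing subsequence has at most one element in each pile (two elements in the same pile are in non-increasing positions in the sequence and also non-increasing in value when read bottom-to-top — more carefully, each element pushed onto an existing pile is $\le$ the element it covers, which was read earlier, so no two pile elements can both belong to an increasing subsequence). Finally, $\lis(\sseq) \ge \ell$: when $\sseq(i)$ creates pile $P_k$, its predecessor $\mathtt{top}(P_{k-1})$ at that moment is strictly smaller and occurs earlier in $\sseq$; following these ``parent'' pointers backwards from the top of $P_\ell$ traces out an increasing subsequence of length $\ell$. This is the only part with any real content, so I expect the careful statement of the pile invariants to be the main obstacle, though it is entirely standard.

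\textbf{Time.} The outer loop runs $n$ times. In each iteration, the only non-trivial work is finding the smallest index $j$ with $\sseq(i) \le \mathtt{top}(P_j)$ (Line~\ref{alg:ps:oldest-pile}) or detecting that none exists. Since the sequence of pile tops is strictly increasing, we can store the tops in a sorted array and perform a binary search in $O(\log \ell) = O(\log n)$ time (a comparison-model binary search; each comparison is between two input-sized integers and costs $O(1)$ in the RAM model). The push itself is $O(1)$. Hence the total running time is $O(n \log n)$.

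\textbf{Space.} The algorithm must remember every element it has pushed, since the returned value $\ell$ is the pile count and, for the subsequence-extraction variant, the parent pointers are needed. There are at most $n$ elements distributed among the piles, each an integer of the input fitting in $O(\log n)$ bits (or $O(\log M)$ if values can exceed $n$, but we take the standard convention that input integers fit in $O(\log n)$ bits), plus the sorted array of at most $\ell \le n$ pile tops and $O(1)$ index variables. Altogether this is $O(n \log n)$ bits, completing the proof. \localqed
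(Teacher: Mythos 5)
Your proposal is correct and follows essentially the same route as the paper: the strictly-increasing-tops invariant (the paper's Observation~\ref{obs:increasing-top}) plus non-increasing piles gives $\lis(\sseq)\le\ell$, the parent-pointer extraction gives $\lis(\sseq)\ge\ell$, binary search over the sorted tops gives $O(n\log n)$ time, and $O(n)$ stored elements of $O(\log n)$ bits each give the space bound. One tiny wording fix: the parent-pointer step should be stated for \emph{every} push of an element onto $P_k$ with $k\ge 2$ (not only the push that creates $P_k$), since the chain starts from an arbitrary element of $P_\ell$; the fact you need — that at the moment of any such push the current top of $P_{k-1}$ is strictly smaller and earlier — is exactly what your minimality-of-$j$ induction already establishes.
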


\subsection{Correctness of \PS}
It is observed in \cite{BursteinL2006combinatorics} that
when the input is a permutation $\pi$,
the elements of each pile form a decreasing subsequence of $\pi$.
This observation easily generalizes as follows.
\begin{observation}
Given a sequence $\sseq$,
the elements of each pile constructed by {\PS}
form a non-increasing subsequence of $\sseq$.
\end{observation}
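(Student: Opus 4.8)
The plan is to argue directly from the way {\PS} chooses which pile to push an element onto. Fix a pile~$P_j$ and suppose that during the execution two elements $\sseq(a)$ and $\sseq(b)$ with $a < b$ are both pushed onto~$P_j$, with $\sseq(a)$ pushed earlier. It suffices to show $\sseq(b) \le \sseq(a)$; since the elements of $P_j$ are read (and pushed) in order of increasing index, this establishes that the pile, read from bottom to top, is a non-increasing subsequence of~$\sseq$.

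First I would reduce to the case where $\sseq(b)$ is pushed onto $P_j$ \emph{immediately} after $\sseq(a)$ among the elements of that pile: if I can show that each newly pushed element of a pile is at most the current top of that pile, then the non-increasing property of the whole pile follows by transitivity of~$\le$. So suppose $\sseq(b)$ is about to be pushed onto $P_j$ and $\sseq(a)$ is currently $\mathtt{top}(P_j)$. If $j = \ell$ (the last pile) and $\sseq(b) > \mathtt{top}(P_\ell)$, then Line~\ref{alg:ps:new-pile} is executed and $\sseq(b)$ is pushed onto a \emph{new} pile $P_{\ell+1}$, not onto $P_j$; hence this case does not contribute. Otherwise Line~\ref{alg:ps:oldest-pile} is executed, and $j$ is chosen as the smallest index with $\sseq(b) \le \mathtt{top}(P_j)$; in particular $\sseq(b) \le \mathtt{top}(P_j) = \sseq(a)$, exactly as needed.

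The remaining point to nail down is the tacit claim that the element sitting on top of $P_j$ at the moment $\sseq(b)$ is pushed is indeed some earlier-read element $\sseq(a)$ with $a < b$, and that the bottom-to-top order of a pile coincides with the order in which its elements were pushed. Both are immediate from the description of a pile as a stack together with the fact that {\PS} scans $\sseq$ from left to right (Line~\ref{alg:ps:for-loop}): the $k$-th element pushed onto a pile is read later than the $(k-1)$-st, so indices strictly increase from bottom to top, witnessing that the pile is a subsequence of~$\sseq$.

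I do not anticipate a genuine obstacle here; the statement is essentially a reformulation of the pile-selection rule. The one place to be slightly careful is the dummy pile $P_0$ initialized with $-\infty$: it only ever contains that single sentinel (nothing is pushed onto it, since every $\sseq(i)$ satisfies $\sseq(i) > -\infty = \mathtt{top}(P_0)$, triggering either a new pile or a push onto some $P_j$ with $j \ge 1$), so the claim holds vacuously for $P_0$ and the argument above applies verbatim to every genuine pile $P_1, \dots, P_\ell$.
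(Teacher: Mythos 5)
Your proof is correct: the paper itself offers no explicit argument for this observation (it merely cites Burstein--Lankham for the permutation case and asserts the generalization is easy), and your direct derivation from the pile-selection rule in Line~\ref{alg:ps:oldest-pile} --- that any element pushed onto an existing pile satisfies $\sseq(b) \le \mathtt{top}(P_j)$, combined with transitivity and the left-to-right scan --- is exactly the intended easy generalization. No gaps; the handling of the new-pile case and the dummy pile $P_0$ is careful and accurate.
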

Hence, any increasing subsequence of $\sseq$ can contain at most one element in each pile.
This implies that $\lis(\sseq) \le \ell$.

Now we show that $\lis(\sseq) \ge \ell$.
Using the piles, we can obtain an increasing subsequence of length $\ell$,
in reversed order, as follows~\cite{AldousP1999longest}:
\begin{enumerate}
  \item Pick an arbitrary element of $P_{\ell}$;
  \item For $1 \le i < \ell$, let $\sseq(h)$ be the element picked from $P_{i+1}$.
  Pick the element $\sseq(h')$ that was the top element of $P_{i}$ when $\sseq(h)$ was pushed to $P_{i+1}$.
\end{enumerate}
Since $h' < h$ and $\sseq(h') < \sseq(h)$ in each iteration,
the $\ell$ elements that are selected form an increasing subsequence of $\sseq$.
This completes the correctness proof for \PS.

The proof above can be generalized to show the following characterization for the piles. 
\begin{proposition}
[\cite{BursteinL2006combinatorics}]
\label{prop:pile-characterization}
$\sseq(i) \in P_{j}$ if and only if 
a longest increasing subsequence of $\sseq$ ending at $\sseq(i)$ has length $j$.
\end{proposition}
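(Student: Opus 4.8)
The plan is to prove Proposition~\ref{prop:pile-characterization} by induction on $i$, maintaining as an invariant the stronger statement that describes the whole configuration after processing $\sseq(1),\dots,\sseq(i)$: namely, for every $k\le i$, the element $\sseq(k)$ lies in pile $P_j$ exactly when the longest increasing subsequence of $\sseq\subseq{1,\dots,i}$ ending at $\sseq(k)$ has length $j$, and moreover the top of $P_j$ (for each currently nonempty $P_j$) is the smallest value $v$ such that some increasing subsequence of length $j$ seen so far ends at value $v$. This ``top equals minimal tail value'' claim is the real engine; it is the classical patience-sorting invariant and it is what makes the greedy choice in Lines~\ref{alg:ps:new-pile}--\ref{alg:ps:oldest-pile} correct.

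The base case $i=0$ is vacuous (only the dummy pile $P_0$ with $-\infty$). For the inductive step, suppose the invariant holds after processing $\sseq(1),\dots,\sseq(i-1)$, and consider reading $x=\sseq(i)$. Let $j$ be the index chosen by the algorithm: either $j=\ell+1$ if $x>\mathtt{top}(P_\ell)$ and a new pile is created, or $j$ is the smallest index with $x\le\mathtt{top}(P_j)$. I would argue two directions. First, the longest increasing subsequence ending at $\sseq(i)$ has length \emph{at least} $j$: if $j=1$ this is trivial; otherwise $x>\mathtt{top}(P_{j-1})$ (by minimality of $j$, or because $j-1=\ell$), and by the invariant $\mathtt{top}(P_{j-1})$ is the tail of some increasing subsequence of length $j-1$, which $x$ extends. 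Second, it has length \emph{at most} $j$: any increasing subsequence ending at $x$ of length $m$ would have its predecessor $\sseq(k)$ with $\sseq(k)<x$ sitting, by the induction hypothesis, in pile $P_{m-1}$; but then $\mathtt{top}(P_{m-1})\le\sseq(k)<x$ at the time $x$ is read (tops only decrease over time — a consequence of the invariant, since pushing to a pile never increases its top), so $x>\mathtt{top}(P_{m-1})$, forcing $j\le m-1$ to be impossible and hence $m-1 \le j-1$, i.e.\ $m\le j$. Combining the two bounds, the longest increasing subsequence ending at $\sseq(i)$ has length exactly $j$, which is precisely the pile it was pushed to. Finally I would re-establish the ``top equals minimal tail value'' part: pushing $x$ onto $P_j$ makes the new top $x$, and $x$ is indeed the smallest tail value of a length-$j$ increasing subsequence seen so far — any smaller such tail value $v<x$ would already have been the top of $P_j$ and would have caused the algorithm to choose an index $\le j$ for $x$ that is still $j$ (consistent) while keeping $v$; one checks that $x$ does not beat $v$, so the claim is really that either $P_j$ was empty before or its old top was $\ge x$, both of which hold by the case distinction defining $j$.

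Two small technical points deserve care and I would handle them explicitly. The first is the non-strict inequality: the algorithm pushes $x$ to a pile whose top is $\ge x$ (not $>x$), which is exactly what is needed so that equal elements cannot be chained into an ``increasing'' subsequence; this is why the piles are non-increasing (Observation) rather than strictly decreasing, and why the bound ``length at most $j$'' uses a strict predecessor $\sseq(k)<x$. The second is making the monotonicity ``tops only decrease'' rigorous: when $x$ is pushed to $P_j$, the new top $x$ satisfies $x\le\mathtt{top}(P_j)$ if $P_j$ was nonempty (by choice of $j$), so the top of each pile is non-increasing in time; this is what licenses comparing $\mathtt{top}(P_{m-1})$ at read-time $i$ with the value $\sseq(k)$ that was pushed earlier. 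The main obstacle is not any single deep step but rather bookkeeping the two-sided argument cleanly — in particular phrasing the ``at most $j$'' direction so that it genuinely uses the induction hypothesis applied to the predecessor element together with time-monotonicity of tops, rather than circularly invoking the statement being proved.
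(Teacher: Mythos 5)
Your proof is correct in substance, but it is worth noting that the paper does not actually prove Proposition~\ref{prop:pile-characterization}: it cites Burstein--Lankham and remarks only that ``the proof above can be generalized,'' referring to the two-sided correctness argument for {\PS} (piles are non-increasing subsequences, plus the back-tracing construction). Your induction on $i$ fills in exactly that generalization, and the two essential ingredients coincide with the paper's sketch: the lower bound comes from extending a length-$(j-1)$ witness ending at $\mathtt{top}(P_{j-1})$ (the paper's back-tracing step), and the upper bound from the fact that an earlier, strictly smaller element must sit in a strictly earlier pile (the paper's Observations~\ref{obs:increasing-top} and~\ref{obs:filter}). So the route is the same in spirit; yours is simply organized as an induction over the reading order rather than as a global argument on the final configuration, and it is self-contained where the paper defers to a citation.

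Two small remarks. First, your auxiliary invariant that $\mathtt{top}(P_j)$ is the \emph{minimal} tail value of a length-$j$ increasing subsequence is true but unnecessary: the lower-bound step only needs that $\mathtt{top}(P_{j-1})$ is the tail of \emph{some} length-$(j-1)$ increasing subsequence, which already follows from your main invariant applied to the element currently on top of $P_{j-1}$. Since re-establishing the minimality claim is the muddiest part of your write-up (and the hardest to get right with repeated values), you should simply drop it. Second, in the upper-bound step the predecessor $\sseq(k)$ of a length-$m$ increasing subsequence ending at $x$ lies in $P_{j'}$ for some $j'\ge m-1$, not necessarily in $P_{m-1}$; the argument survives because $\mathtt{top}(P_{m-1})\le\mathtt{top}(P_{j'})\le\sseq(k)<x$ by Observation~\ref{obs:increasing-top} together with the time-monotonicity of tops, but you should invoke that observation explicitly rather than leave the left-to-right monotonicity implicit. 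Neither point is a genuine gap.
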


\subsection{Time and space complexity of \PS}

Observe that at any point in time, the top elements of the piles are ordered increasingly from left to right.
Namely, $\mathtt{top}(P_{k}) < \mathtt{top}(P_{k'})$ if $k < k'$.
This is observed in \cite{BursteinL2006combinatorics} for inputs with no repeated elements.
We can see that the statement holds also for inputs with repetitions.
\begin{observation}
  \label{obs:increasing-top}
  At any point in time during the execution of {\PS} and for any $k$ and $k'$ with $1 \le k < k' \le \ell$,
  we have $\mathtt{top}(P_{k}) < \mathtt{top}(P_{k'})$ if $P_{k}$ and $P_{k'}$ are nonempty.
\end{observation}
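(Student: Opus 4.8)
The plan is to prove the statement by induction on the number of push operations performed, showing that the invariant ``the top elements of the nonempty piles are strictly increasing from left to right'' is maintained after every step of Algorithm~\ref{alg:patience-sorting}. The base case is trivial: before any element is read, only the dummy pile $P_0$ exists and there is nothing to check among $P_1, \dots, P_\ell$. For the inductive step, I would assume the invariant holds just before the processing of $\sseq(i)$ and examine the two branches of the conditional. Only the pile that receives $\sseq(i)$ changes its top, so I only need to verify the inequalities involving that one pile.

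First I would handle the branch on Line~\ref{alg:ps:new-pile}, where $\sseq(i) > \mathtt{top}(P_\ell)$ and a fresh pile $P_{\ell+1}$ is created with $\sseq(i)$ as its sole (hence top) element. Here the only new comparison is between $\mathtt{top}(P_\ell)$ and $\mathtt{top}(P_{\ell+1}) = \sseq(i)$, and the branch condition gives exactly $\mathtt{top}(P_\ell) < \sseq(i)$; combined with the inductive hypothesis for indices up to $\ell$, the invariant holds for indices up to $\ell+1$. Second I would handle the branch on Line~\ref{alg:ps:oldest-pile}, where $j$ is the smallest index with $\sseq(i) \le \mathtt{top}(P_j)$ and $\sseq(i)$ is pushed onto $P_j$, making $\sseq(i)$ the new top of $P_j$. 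I need two inequalities: $\mathtt{top}(P_{j-1}) < \sseq(i)$ for the left neighbor (when $j \ge 2$), and $\sseq(i) < \mathtt{top}(P_{j+1})$ for the right neighbor (when $j+1 \le \ell$). The left inequality follows from minimality of $j$: since $j-1 < j$, the pile $P_{j-1}$ does not satisfy $\sseq(i) \le \mathtt{top}(P_{j-1})$, so $\mathtt{top}(P_{j-1}) < \sseq(i)$. The right inequality follows from the inductive hypothesis applied to $P_j$ and $P_{j+1}$ together with $\sseq(i) \le \mathtt{top}(P_j)$: we get $\sseq(i) \le \mathtt{top}(P_j) < \mathtt{top}(P_{j+1})$.

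The one point requiring a little care — and the closest thing to an obstacle — is the treatment of empty piles and of the dummy pile $P_0$. Since $P_0$ holds $-\infty$, the test on Line~\ref{alg:ps:oldest-pile} is always satisfied by some pile (at worst $P_0$ itself in the degenerate reading, though in the pseudocode the new-pile branch is taken instead when no real pile qualifies), so $j$ is well defined; and once a pile becomes nonempty it stays nonempty, so ``nonempty'' can only be gained, never lost, as the algorithm proceeds. Thus when we compare $P_k$ and $P_{k'}$ with $k < k'$ and both nonempty at some time $t$, both were nonempty and had well-defined tops at every push between their creation and time $t$, and the inductive argument applies verbatim. I would also note explicitly that the inequality between non-adjacent piles $P_k, P_{k'}$ with $k' > k+1$ follows by chaining the adjacent inequalities, so it suffices to maintain the invariant for consecutive nonempty piles. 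Putting these pieces together completes the induction and hence the proof.
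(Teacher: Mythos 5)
Your proof is correct and is essentially the paper's argument: the paper phrases it as a minimal-counterexample contradiction (the first push that violates the ordering) while you phrase it as a direct induction on pushes, but both rest on the same two facts, namely the minimality of $j$ in Line~\ref{alg:ps:oldest-pile} for the left neighbor and the previously established ordering combined with $\sseq(i) \le \mathtt{top}(P_{j})$ for the right neighbor. Your explicit handling of the new-pile branch, the dummy pile, and the chaining of adjacent inequalities only makes the write-up slightly more careful than the paper's.
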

\begin{proof}
We prove the statement by contradiction. Let $i$ be the first index for which {\PS} pushes $\sseq(i)$ to some pile $P_{j}$,
so that the  statement of the observation becomes false.

First assume that $\mathtt{top}(P_{j}) \ge \mathtt{top}(P_{j'})$ for some $j' > j$.
Let $\sseq(i')$ be the element in $P_{j}$ pushed to the pile right before $\sseq(i)$.
By the definition of ${\PS}$, it holds that 
\[
  \sseq(i') \ge \sseq(i) = \mathtt{top}(P_{j}) \ge \mathtt{top}(P_{j'}).
\]
This contradicts the minimality of $i$
because $\sseq(i')$ was the top element of $P_{j}$ before $\sseq(i)$ was pushed to $P_{j}$.

Next assume that $\mathtt{top}(P_{j'}) \ge \mathtt{top}(P_{j})$ for some $j' < j$.
This case contradicts the definition of {\PS}
since $\sseq(i) = \mathtt{top}(P_{j}) \le \mathtt{top}(P_{j'})$ and thus $\sseq(i)$ actually has to be pushed to a pile
with an index smaller or equal to $j'$.
\end{proof}
The observation above implies that
Line~\ref{alg:ps:oldest-pile} of Algorithm~\ref{alg:patience-sorting} can be executed in $O(\log n)$ time by using binary search.
Hence, {\PS} runs in $O(n \log n)$ time.

The total number of elements in the piles is $O(n)$ and thus {\PS} consumes $O(n \log n)$ bits.
If it maintains all elements in the piles,
it can compute an actual longest increasing subsequence in the same time and space complexity as described above.
Note that to compute $\lis(\sseq)$, it suffices to remember the top elements of the piles.
However, the algorithm still uses $\Omega(n \log n)$ bits when $\lis(\sseq) \in \Omega(n)$.

\subsection{A simple $O(n)$-bits algorithm}
\label{ssec:n-bits-algorithm}
Here we observe that, when the input is a permutation $\pi$ of $\{1,\dots,n\}$,
$\lis(\pi)$ can be computed in $O(n^{2})$ time with $O(n)$ bits of working space.
The algorithm maintains a used/unused flag for each number in $\{1,\dots,n\}$.
Hence, this noncomparison-based algorithm cannot be generalized for general inputs directly.

Let $\sseq$ be a sequence of integers without repetitions.
A subsequence $\sseq\subseq{i_{1}, \dots, i_{\ell}}$ is the \emph{left-to-right minima subsequence}
if $\{i_{1}, \dots, i_{\ell}\} = \{i \mid \sseq(i) = \min\{\sseq(j) \mid 1 \le j \le i \}\}$.
In other words, the left-to-right minima subsequence is made by scanning $\sseq$ from left to right
and greedily picking elements to construct a maximal decreasing subsequence.

Burstein and Lankham~\cite[Lemma~2.9]{BursteinL2006combinatorics}
showed that the first pile $P_{1}$ is the left-to-right minima subsequence of $\pi$
and that the $i$th pile $P_{i}$ is the left-to-right minima subsequence of a sequence obtained from $\pi$
by removing all elements in the previous piles $P_{1}, \dots, P_{i-1}$.

Algorithm~\ref{alg:patience-sorting-mod1} below uses this characterization of piles.
The correctness follows directly from the characterization.
It uses a constant number of pointers of $O(\log n)$ bits
and a Boolean table of length $n$ for maintaining ``used'' and ``unused'' flags.
Thus it uses $n + O(\log n)$ bits working space in total.
The running time is $O(n^{2})$: each for-loop takes $O(n)$ time and the loop is repeated at most $n$ times.

\begin{algorithm}
  \caption{Computing $\lis(\pi)$ with $O(n)$ bits and in $O(n^{2})$ time}
  \label{alg:patience-sorting-mod1}
  \begin{algorithmic}[1]
    \State set $\ell := 0$ and mark all elements in $\pi$ as ``unused''
    \While{there is an ``unused'' element in $\pi$}
    \State increment $\ell$ and set $t := \infty$
    \For{$i=1$ \textbf{to} $n$} \Comment{\textcolor{gray}{this for-loop constructs the next pile implicitly}}
        \If{$\pi(i)$ is unused and $\pi(i) < t$}
            \State mark $\pi(i)$ as ``used'' and set $t := \pi(i)$ \Comment{\textcolor{gray}{$t$ is currently on top of $P_{\ell}$}}
	\EndIf
    \EndFor
    \EndWhile
    \State \Return $\ell$
  \end{algorithmic}
\end{algorithm}


\section{An algorithm for computing the length}
\label{sec:length}

In this section, we present our main algorithm
that computes $\lis(\sseq)$ with $O(s \log n)$ bits in $O(\frac{1}{s} \cdot n^{2} \log n)$ time for $\sqrt{n} \le s \le n$.
Note that the algorithm here outputs the length $\lis(\sseq)$ only.
The next section discusses efficient solutions to actually compute a longest sequence.

In the following, by $P_{i}$ for some $i$ we mean the $i$th pile obtained by (completely)
executing {\PS} unless otherwise stated.
(We sometimes refer to a pile at some specific point of the execution.)
Also, by $P_{i}(j)$ for $1 \le j \le |P_{i}|$ we denote the $j$th element added to $P_{i}$.
That is, $P_{i}(1)$ is the first element added to $P_{i}$ and $P_{i}(|P_{i}|)$ is the top element of $P_{i}$.

To avoid mixing up repeated elements,
we assume that each element $\sseq(j)$ of the piles is stored with its index $j$.
In the following, we mean by ``$\sseq(j)$ is in $P_{i}$'' that the $j$th element of $\sseq$ is pushed to $P_{i}$.
Also, by ``$\sseq(j)$ is $P_{i}(r)$'' we mean that the $j$th element of $\sseq$ is the $r$th element of $P_{i}$.

We start with an overview of our algorithm.
It scans over the input $O(n / s)$ times.
In each pass, it assumes that a pile $P_{i}$ with at most $s$ elements is given, which has been computed in the previous pass.
Using this pile $P_{i}$, it filters out the elements in the previous piles $P_{1}, \dots, P_{i-1}$.
It then basically simulates {\PS} but only in order to compute the next $2s$ piles.
As a result of the pass, it computes a new pile $P_{j}$ with at most $s$ elements such that $j \ge i + s$.

The following observation, that follows directly from the definition of {\PS} and Observation~\ref{obs:increasing-top}, 
will be useful for the purpose of filtering out elements in irrelevant piles.
\begin{observation}
\label{obs:filter}
Let $\sseq(y) \in P_{j}$ with $j \ne i$.
If $\sseq(x)$ was the top element of $P_{i}$ when $\sseq(y)$ was pushed to $P_{j}$,
then $j < i$ if $\sseq(y) < \sseq(x)$, and $j > i$ if $\sseq(y) > \sseq(x)$.
\end{observation}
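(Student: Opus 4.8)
The plan is to prove Observation~\ref{obs:filter} directly from the definition of \PS{} together with Observation~\ref{obs:increasing-top}, by examining the moment when $\sseq(y)$ is pushed to $P_j$. Fix the point in time during the execution of \PS{} at which $\sseq(y)$ is read and pushed. At that instant, let $\sseq(x)$ be the top element of $P_i$; by hypothesis $P_i$ is nonempty at this time, and since $j \ne i$ the element $\sseq(y)$ goes to a different pile.

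First I would handle the case $\sseq(y) < \sseq(x)$. By the rule on Line~\ref{alg:ps:oldest-pile} of Algorithm~\ref{alg:patience-sorting}, $\sseq(y)$ is placed on the \emph{oldest} (smallest-index) pile whose current top is at least $\sseq(y)$. Since $\mathtt{top}(P_i) = \sseq(x) > \sseq(y)$, pile $P_i$ is one candidate, so the chosen index $j$ satisfies $j \le i$, and as $j \ne i$ we get $j < i$. Conversely, if $\sseq(y) > \sseq(x)$, then at this moment $\mathtt{top}(P_i) = \sseq(x) < \sseq(y)$, so $P_i$ is \emph{not} a pile onto which $\sseq(y)$ can be placed; hence the index $j$ chosen (whether on Line~\ref{alg:ps:new-pile} or Line~\ref{alg:ps:oldest-pile}) must be strictly greater than $i$. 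Indeed, every pile $P_k$ with $k \le i$ that is nonempty at this time has $\mathtt{top}(P_k) \le \mathtt{top}(P_i) = \sseq(x) < \sseq(y)$ by Observation~\ref{obs:increasing-top}, so none of them is eligible; thus $j > i$.

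The one subtlety worth spelling out is why, in the second case, we may conclude $j > i$ rather than merely $j \notin \{1,\dots,i\}$ being impossible in some other way: here we use that the piles are indexed consecutively $1,2,\dots,\ell$ at every point in time (a new pile always gets the next index on Line~\ref{alg:ps:new-pile}), so ``not among the first $i$ piles'' is the same as ``index exceeds $i$.'' I do not expect any real obstacle; the main care is simply to argue about the configuration of the piles \emph{at the time $\sseq(y)$ is pushed}, not in the final configuration, and to invoke Observation~\ref{obs:increasing-top} for exactly that intermediate configuration, which is legitimate since that observation is stated for every point in time during the execution.
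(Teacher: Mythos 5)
Your proof is correct and follows exactly the route the paper intends: the paper gives no explicit proof of Observation~\ref{obs:filter}, stating only that it ``follows directly from the definition of {\PS} and Observation~\ref{obs:increasing-top},'' and your case analysis at the moment $\sseq(y)$ is pushed (using the oldest-eligible-pile rule for $j<i$ and the monotonicity of the tops for $j>i$) is precisely that derivation, carried out carefully for the intermediate configuration.
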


Using Observation~\ref{obs:filter}, we can obtain the following algorithmic lemma
that plays an important role in the main algorithm.
\begin{lemma}
\label{lem:ignoring_piles}
Having stored $P_{i}$ explicitly in the additional memory and given an index $j > i$, 
the size $|P_{k}|$ for all $i+1 \le k \le \min\{j, \lis(\sseq)\}$ can be computed
in $O(n \log n)$ time with $O((|P_{i}| + j-i) \log n)$ bits.
If $\lis(\sseq) < j$, then we can compute $\lis(\sseq)$ in the same time and space complexity.
\end{lemma}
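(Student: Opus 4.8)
The plan is to simulate {\PS} in a single left-to-right pass, but to maintain only the top elements of the piles $P_{i+1}, \dots, P_{j}$ (together with the given explicit copy of $P_{i}$), discarding everything below and everything to the left of pile $i$. First I would observe that when a new element $\sseq(y)$ is read, I can decide in $O(\log n)$ time whether it is irrelevant for the piles we care about: by Observation~\ref{obs:filter}, comparing $\sseq(y)$ with the current top of $P_{i}$ tells us whether $\sseq(y)$ would be pushed to some pile with index $< i$ (namely when $\sseq(y)$ is at most the current top of $P_{i}$) or with index $\ge i$ (when it exceeds that top). In the former case we simply skip $\sseq(y)$; in the latter case we must update one of the piles with index in $\{i, i+1, \dots\}$, and we locate the correct pile by binary search over the stored top elements $\mathtt{top}(P_{i}), \mathtt{top}(P_{i+1}), \dots$, exactly as in the standard analysis behind Observation~\ref{obs:increasing-top}. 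Crucially, the element pushed to $P_{i}$ itself can change over the pass, so $P_{i}$ must be stored in full (hence the $|P_{i}| \log n$ term); but for $k > i$ we only ever need $\mathtt{top}(P_{k})$ and the counter $|P_{k}|$, so each such pile costs $O(\log n)$ bits.

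The second ingredient is a cutoff: once we have created piles up to index $j$, any element that Observation~\ref{obs:filter} routes to a pile with index $> j$ can be ignored for the purpose of this lemma, because such an element never affects the top or the size of any pile with index $\le j$ (pushes only ever go to a single pile, and by Observation~\ref{obs:increasing-top} the routing of a future element to a pile of index $\le j$ depends only on the tops of those piles). So the algorithm maintains at most $(j - i)$ tops plus the explicit pile $P_{i}$, giving the claimed $O((|P_{i}| + j - i)\log n)$ bits; the pass reads $n$ elements and does $O(\log n)$ work per element, giving $O(n \log n)$ time. At the end, the number of piles actually created with index in $\{i+1, \dots, j\}$ is exactly $\min\{j, \lis(\sseq)\} - i$, and the recorded counters give $|P_{k}|$ for each such $k$; if fewer than $j - i$ new piles were created, then no element was ever routed past pile $\lis(\sseq) < j$, so the final pile count is precisely $\lis(\sseq)$, which we output.

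The main obstacle, and the point that needs the most care, is arguing that this truncated simulation faithfully reproduces the tops and sizes of $P_{i}, P_{i+1}, \dots, P_{\min\{j,\lis(\sseq)\}}$ as they would appear in the genuine, untruncated run of {\PS}. The worry is that discarding elements destined for piles $1,\dots,i-1$ or $j+1, j+2, \dots$ might perturb the behaviour of the piles we keep. I would handle this by an induction on the input position: the invariant is that after processing $\sseq(1),\dots,\sseq(t)$, the stored data equals the true tops and sizes of $P_{i},\dots,P_{j}$ after the true {\PS} has processed the same prefix. The inductive step splits into the three cases of Observation~\ref{obs:filter} (target index $<i$, in $[i,j]$, or $>j$); in the first and third cases the true run modifies only piles we are not tracking, so the invariant is trivially preserved, and in the middle case the binary-search update mirrors Line~\ref{alg:ps:oldest-pile} / Line~\ref{alg:ps:new-pile} of Algorithm~\ref{alg:patience-sorting} exactly. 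The only subtlety is the very first pile we track, $P_{i}$: here we genuinely need its whole contents rather than just its top, because an incoming element equal to or below $\mathtt{top}(P_{i})$ that would normally land in $P_{i}$ itself (rather than in a pile with smaller index) changes $\mathtt{top}(P_{i})$, and this in turn affects how later elements are routed among $P_{i+1},\dots,P_{j}$; storing $P_{i}$ explicitly is precisely what lets the case analysis of Observation~\ref{obs:filter} remain valid throughout the pass.
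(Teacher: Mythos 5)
Your proposal is correct and follows essentially the same route as the paper's proof: a single simulated pass of {\PS} that keeps $P_{i}$ in full (with a pointer tracking its current top) to filter out elements destined for piles of index below $i$, discards elements exceeding the current top of $P_{j}$ to filter out piles of index above $j$, and otherwise maintains only the tops and counts of $P_{i+1},\dots,P_{j}$, detecting $\lis(\sseq)<j$ when pile $j$ is never created. Your closing paragraph correctly resolves the one subtlety in your opening description (an element at most $\mathtt{top}(P_{i})$ may land in $P_{i}$ itself rather than in a pile of smaller index), which is exactly what the paper's explicit test ``$\sseq(x)$ is $P_{i}(r+1)$'' handles.
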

\begin{proof}
Recall that {\PS} scans the sequence $\sseq$ from left to right
and puts each element to the appropriate pile.
We process the input in the same way except that we filter out, and thereby ignore, the elements in
the piles $P_{h}$ for which $h <i$ or $h >j$.

To this end, we use the following two filters whose correctness follows from Observation~\ref{obs:filter}.

\emph{(Filtering~$P_h$ with~$h<i$.)} To filter out the elements that lie in $P_{h}$ for some $h <i$,
we maintain an index $r$ that points to the element of $P_{i}$ read most recently in the scan. Since $P_{i}$ is given explicitly to the algorithm, we can maintain such a pointer~$r$. 
 
When we read a new element $\sseq(x)$, we have three cases.
\begin{itemize}
  \item If $\sseq(x)$ is $P_{i}(r+1)$, then we increment the index $r$. 
  \item Else if $\sseq(x) < P_{i}(r)$, then $\sseq(x)$ is ignored since it is in $P_{h}$ for some $h <i$.
  \item Otherwise we have $\sseq(x) > P_{i}(r)$. In this case $\sseq(x)$ is in $P_{h}$ for some $h > i$.
\end{itemize}

\emph{(Filtering~$P_h$ with~$h>j$.)}  The elements in $P_{h}$ for $h > j$ can be filtered without maintaining additional information as follows.
Let again $\sseq(x)$ be the newly read element.
\begin{itemize}
  \item If no part of $P_{j}$ has been constructed yet, then $\sseq(x)$ is in $P_{h}$ for some $h \le j$.
  \item Otherwise, we compare $\sseq(x)$ and the element $\sseq(y)$ currently on the top of $P_{j}$.
  \begin{itemize}
    \item If $\sseq(x) > \sseq(y)$, then $\sseq(x)$ is in $P_{h}$ for some $h > j$, and thus ignored.
    \item Otherwise $\sseq(x)$ is in $P_{h}$ for some $h \le j$.
  \end{itemize}
\end{itemize}

We simulate {\PS} only for the elements that pass both filters above.
While doing so, we only maintain the top elements of the piles and additionally store the size of each pile. This requires at most~$O((j-i) \log n)$ space, as required by the statement of the lemma.
For details see Algorithm~\ref{alg:computing-size}.

The running time remains the same since we only need constant number of additional steps 
for each step in {\PS} to filter out irrelevant elements.
If $P_{j}$ is still empty after this process, we can conclude that
$\lis(\sseq)$ is the index of the newest pile constructed.
\end{proof}
\begin{algorithm}
  \caption{Computing $|P_{k}|$ for all~$k$ with $i+1 \le k \le \min\{j, \lis(\sseq)\}$ when $P_{i}$ is given}
  \label{alg:computing-size}
  \begin{algorithmic}[1]
    \State set $r := 0$ \Comment{\textcolor{gray}{$r$ points to the most recently read element in $P_{i}$}}
    \State set $\ell := i$ \Comment{\textcolor{gray}{the largest index of the piles constructed so far}}
    \State initialize $p_{i+1}, \dots, p_{j}$ to $\infty$ \Comment{\textcolor{gray}{$p_{k}$ is the element currently on top of $P_{k}$}}
    \State initialize $c_{i+1}, \dots, c_{j}$ to $0$ \Comment{\textcolor{gray}{$c_{k}$ is the current size of $P_{k}$}}
    \For{$x=1$ \textbf{to} $n$}
        \Statex $\triangleright$ \textcolor{gray}{filtering out irrelevant elements}
        \If{$\sseq(x)$ is $P_{i}(r+1)$}
	    \State increment $r$ and continue the for-loop
	\ElsIf{$\sseq(x) < P_{i}(r)$ or ($\ell \ge j$ and $\sseq(x) > p_{j}$)}
	    \State ignore the element and continue  the for-loop
	\EndIf
        \Statex $\triangleright$ \textcolor{gray}{push $\sseq(x)$ to the appropriate pile}
	\If{$\sseq(x) > p_{\ell}$}
            \State increment $\ell$ and set $h := \ell$
	\Else
	    \State set $h$ to be the smallest index with $\sseq(i) < p_{h}$
	\EndIf
	\State set $p_{h} := \sseq(x)$ and increment $c_{h}$
    \EndFor
  \end{algorithmic}
\end{algorithm}

The proof of Lemma~\ref{lem:ignoring_piles} can be easily adapted to also compute the pile~$P_j$ explicitly. 
For this, we simply additionally store all elements of $P_{j}$ as they are added to the pile.
\begin{lemma}
\label{lem:compute-a-small-pile}
Given $P_{i}$ and an index $j$ such that $i < j \le \lis(\sseq)$, we can compute $P_{j}$
in $O(n \log n)$ time with $O((|P_{i}| + |P_{j}| + j-i) \log n)$ bits.
\end{lemma}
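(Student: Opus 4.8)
The plan is to essentially re-run the proof of Lemma~\ref{lem:ignoring_piles} verbatim, but with one additional bookkeeping step: besides maintaining the top element $p_j$ and the size $c_j$ of the pile $P_j$, the algorithm also records, in a separate buffer, every element that is ever pushed to $P_j$, in the order in which it is pushed. Since these pushes occur exactly when the element $\sseq(x)$ passes both filters and the index $h$ computed in the simulation equals $j$, detecting them is immediate and costs only $O(1)$ extra work per push. After the scan completes, this buffer contains precisely the sequence $\seq{P_j(1), P_j(2), \dots, P_j(|P_j|)}$, which is the explicit description of $P_j$ that the lemma asks for.

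First I would invoke the hypothesis $j \le \lis(\sseq)$ to guarantee that $P_j$ is nonempty, so the procedure of Lemma~\ref{lem:ignoring_piles} actually runs through all $n$ input elements and terminates having computed $|P_k|$ for all $i+1 \le k \le j$ (the branch where $\lis(\sseq) < j$ does not occur). Next I would argue correctness: the simulation in Algorithm~\ref{alg:computing-size} faithfully reproduces the behaviour of {\PS} on the subsequence of elements lying in piles $P_{i+1}, \dots, P_j$, and in particular an element is pushed to the simulated pile with index $j$ exactly when {\PS} pushes it to $P_j$; hence the recorded buffer is exactly $P_j$. The running time is still dominated by a single left-to-right scan of $\sseq$ with $O(\log n)$ work per element for the binary search on Line~\ref{alg:ps:oldest-pile}, so it remains $O(n\log n)$.

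For the space bound, the original procedure already uses $O((|P_i| + (j-i))\log n)$ bits — $O(|P_i|\log n)$ for the explicit copy of $P_i$ and the pointer $r$, and $O((j-i)\log n)$ for the arrays $p_{i+1},\dots,p_j$ and $c_{i+1},\dots,c_j$. The only addition is the buffer holding the elements of $P_j$ together with their indices, which occupies $O(|P_j|\log n)$ bits. Summing gives $O((|P_i| + |P_j| + (j-i))\log n)$ bits, matching the claim.

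I do not expect any real obstacle here; the statement is a routine strengthening of Lemma~\ref{lem:ignoring_piles}, and the only thing to be slightly careful about is the accounting — making sure the $|P_j|$ term is genuinely needed (it is, since $|P_j|$ can exceed $j-i$) and that writing the buffer does not disturb the filtering invariants (it does not, since it is write-only auxiliary storage touched only on a push to pile $j$). One should also note explicitly that elements are stored with their indices, consistent with the convention fixed earlier in Section~\ref{sec:length}, so that repeated values are disambiguated when $P_j$ is later used as the input pile for a subsequent pass.
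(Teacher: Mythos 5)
Your proposal is correct and follows essentially the same route as the paper, which likewise proves this lemma by running the procedure of Lemma~\ref{lem:ignoring_piles} and additionally storing each element of $P_{j}$ as it is pushed, accounting for the extra $O(|P_{j}|\log n)$ bits. Your added remarks on nonemptiness of $P_{j}$ and on storing indices alongside values are consistent with the paper's conventions and do not change the argument.
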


Assembling the lemmas of this section, we now present our first main result.
The corresponding pseudocode of the algorithm 
can be found in Algorithm~\ref{alg:patience-sorting-2}.

\begin{theorem}
\label{thm:length}
There is an algorithm that, given an integer $s$ satisfying $\sqrt{n} \le s \le n$ and a sequence $\sseq$ of length $n$, computes $\lis(\sseq)$ 
in $O(\frac{1}{s} \cdot n^{2} \log n)$ time with $O(s \log n)$ bits of space.
\end{theorem}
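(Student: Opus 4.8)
The plan is to combine Lemma~\ref{lem:compute-a-small-pile} with a careful case analysis on the sizes of the piles. The key point is that a pile of size larger than~$s$ is itself a non-increasing subsequence witnessing many decreasing elements, so there cannot be too many large piles in a row; more precisely, since the piles induce a partition of the input into non-increasing subsequences, a consecutive block of piles each of size~$>s$ would consume~$>s$ elements per pile. Concretely, I would partition the index range $\{1,\dots,\lis(\sseq)\}$ into $O(n/s)$ phases, maintaining the invariant that at the start of a phase we have stored explicitly some pile $P_i$ with $|P_i| \le s$ (for the first phase take the dummy pile $P_0$, so $i=0$).

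Within a phase starting at~$i$, I would invoke Lemma~\ref{lem:ignoring_piles} with the target index $j := i + 2s$ (capped at what turns out to be $\lis(\sseq)$). If the run reports $\lis(\sseq) < j$, we are done and output it. Otherwise we have computed $|P_k|$ for all $i+1 \le k \le i+2s$ using $O((|P_i| + 2s)\log n) = O(s\log n)$ bits. Among the piles $P_{i+1},\dots,P_{i+2s}$ I claim at least one has size~$\le s$: otherwise all $2s$ of them have size~$>s$, contributing more than $2s\cdot s \ge 2s \cdot \sqrt{n} \ge 2n$ elements (using $s \ge \sqrt n$), contradicting that the piles partition the $n$ input elements. (Even the weaker bound that more than~$s$ of them have size~$>s$ already forces~$>s^2 \ge n$ elements.) Pick the \emph{largest} index $j^\star$ in that range with $|P_{j^\star}| \le s$; then $j^\star \ge i+s$, because if $j^\star < i+s$ then all of $P_{j^\star+1},\dots,P_{i+2s}$ — more than $s$ piles — have size~$>s$, again exceeding $n$ elements. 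Now re-run Lemma~\ref{lem:compute-a-small-pile} with this~$j^\star$ to obtain $P_{j^\star}$ explicitly in $O(n\log n)$ time and $O((|P_i| + |P_{j^\star}| + (j^\star-i))\log n) = O(s\log n)$ bits, discard all other stored data, and set $i := j^\star$ for the next phase. Since $j^\star - i \ge s$ each phase, there are $O(n/s)$ phases, giving total time $O(\frac{1}{s}\cdot n \cdot n\log n) = O(\frac{1}{s}\cdot n^2\log n)$ and peak space $O(s\log n)$. The final output is the index~$\ell$ of the last pile, discovered when some phase's run of Lemma~\ref{lem:ignoring_piles} detects $\lis(\sseq) < i+2s$.

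The main obstacle, and the place that needs the most care, is the pigeonhole argument guaranteeing the existence of a small pile with index advanced by at least~$s$: it must be robust enough that the recursion actually terminates in $O(n/s)$ steps rather than making tiny progress. This is exactly where the hypothesis $s \ge \sqrt n$ is used — it ensures $s^2 \ge n$, so "more than $s$ piles of size $>s$" is impossible. A secondary subtlety is bookkeeping the space: we must argue that at the moment we call Lemma~\ref{lem:compute-a-small-pile} to extract $P_{j^\star}$, we are not simultaneously holding the size array from the earlier call — but since the first sub-run of the phase only needs to tell us \emph{which} index~$j^\star$ to aim for, we may record just that single integer, free the array, and then start the second sub-run fresh; both sub-runs individually fit in $O(s\log n)$ bits. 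One should also double-check the corner case where $\lis(\sseq)$ itself is small (say $\le 2s$): then the very first phase's call with $j = 2s$ already reports the answer, and no pile-extraction is needed.
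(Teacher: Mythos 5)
Your proposal is correct and takes essentially the same approach as the paper: both proceed in $O(n/s)$ phases anchored at an explicitly stored small pile $P_i$, using Lemma~\ref{lem:ignoring_piles} to compute the sizes of the next $2s$ piles and Lemma~\ref{lem:compute-a-small-pile} to extract a new small anchor pile whose index has advanced by at least $s$. The only cosmetic difference is the pigeonhole step --- the paper finds a pile of size at most $n/s \le s$ by averaging over the upper half of the window $\{i+s+1,\dots,i+2s\}$, whereas you take the largest index in $\{i+1,\dots,i+2s\}$ with pile size at most $s$ and argue it cannot lie in the lower half --- but both rest on $s \ge \sqrt{n}$ and the disjointness of the piles in exactly the same way.
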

\begin{proof}
To apply Lemmas~\ref{lem:ignoring_piles} and \ref{lem:compute-a-small-pile} at the beginning,
we start with a dummy pile $P_{0}$ with a single dummy entry $P_{0}(1) = -\infty$.
In the following, assume that for some $i \ge 0$ we computed the pile $P_{i}$ of size at most $s$ explicitly.
We repeat the following process until we find $\lis(\sseq)$.

In each iteration, we first compute the size $|P_{k}|$ for $i+1 \le k \le i+2s$.
During this process, we may find $\lis(\sseq) < i+2s$.
In such a case we output $\lis(\sseq)$ and terminate.
Otherwise, we find an index $j$ such that $i+s+1 \le j \le i+2s$ and $|P_{j}| \le n/s$.
Since $s \ge \sqrt{n}$, it holds that $|P_{j}| \le n/\sqrt{n} = \sqrt{n} \le s$.
We then compute $P_{j}$ itself to replace $i$ with $j$ and repeat.

By Lemmas~\ref{lem:ignoring_piles} and \ref{lem:compute-a-small-pile},
each pass can be executed in $O(n \log n)$ time with $O(s \log n)$ bits.
There are at most $\lis(\sseq) / s$ iterations, since in each iteration 
the index $i$ increases by at least $s$ or $\lis(\sseq)$ is determined.
Since $\lis(\sseq) \le n$, the total running time is $O(\frac{1}{s} \cdot n^{2} \log n)$.
\end{proof}
\begin{algorithm}
  \caption{Computing $\lis(\sseq)$ with $O(s \log n)$ bits in $O(\frac{1}{s} \cdot n^{2} \log n)$ time}
  \label{alg:patience-sorting-2}
  \begin{algorithmic}[1]
    \State set $i := 0$ and initialize the dummy pile $P_{0}$ with the single element $-\infty$
    \Loop
      \State compute the size of $P_{k}$ for all~$k$ with $i+1 \le k \le i+2s$
      \If{we find $\lis(\sseq) < i+2s$}
        \State \Return $\lis(\sseq)$
      \EndIf
      \State let $j$ be the largest index such that $|P_{j}| \le s$
      \Comment{\textcolor{gray}{$i + s + 1 \le j \le i + 2s$}}
      \State compute $P_{j}$ and set $i := j$
    \EndLoop
  \end{algorithmic}
\end{algorithm}

In the case of the smallest memory consumption we conclude the following corollary.

\begin{corollary}
Given a sequence $\sseq$ of length $n$,
$\lis(\sseq)$ can be computed in $O(n^{1.5} \log n)$ time with $O(\sqrt{n} \log n)$ bits of space.
\end{corollary}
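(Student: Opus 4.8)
The plan is to obtain the corollary as an immediate specialization of Theorem~\ref{thm:length} to the parameter value $s = \sqrt{n}$. First I would verify that this is an admissible choice of the memory bound: the theorem requires $\sqrt{n} \le s \le n$, and for $n \ge 1$ the value $s = \sqrt{n}$ trivially satisfies $\sqrt{n} \le s \le n$. If $n$ is not a perfect square one instead takes $s = \lceil \sqrt{n}\,\rceil$, which still lies in the permitted range and changes the resulting bounds only by a constant factor, hence is absorbed into the $O$-notation.

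Next I would substitute $s = \sqrt{n}$ into the two guarantees of Theorem~\ref{thm:length}. The space bound $O(s \log n)$ becomes $O(\sqrt{n}\,\log n)$ bits. The time bound $O\!\left(\frac{1}{s}\cdot n^{2}\log n\right)$ becomes $O\!\left(\frac{1}{\sqrt{n}}\cdot n^{2}\log n\right) = O(n^{1.5}\log n)$. Together these are exactly the bounds asserted in the corollary, so running the algorithm of Theorem~\ref{thm:length} with $s := \sqrt{n}$ (or $\lceil\sqrt{n}\,\rceil$) computes $\lis(\sseq)$ within the claimed resources.

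Since the statement is a direct corollary of an already-established theorem, there is essentially no obstacle to overcome; the only point that needs even minimal attention is the integrality of $\sqrt{n}$, and as noted this is handled by rounding up and is invisible to the asymptotic analysis.
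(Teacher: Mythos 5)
Your proposal is correct and matches the paper exactly: the corollary is obtained by instantiating Theorem~\ref{thm:length} with $s = \sqrt{n}$ (or $\lceil\sqrt{n}\,\rceil$), which yields the stated $O(\sqrt{n}\log n)$-bit and $O(n^{1.5}\log n)$-time bounds by direct substitution. The remark about rounding up when $n$ is not a perfect square is a harmless bit of extra care that the paper leaves implicit.
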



\section{An algorithm for finding a longest increasing subsequence}
\label{sec:lis-sequence}

It is easy to modify the algorithm in the previous section
in such a way that it outputs an element of the final pile $P_{\lis(\sseq)}$,
which is the last element of a longest increasing subsequence by Proposition~\ref{prop:pile-characterization}.
Thus we can repeat the modified algorithm $n$ times
(considering only the elements smaller than and appearing before the last output)
and actually find a longest increasing subsequence.\footnote{%
This algorithm outputs a longest increasing subsequence in the reversed order.
One can access the input in the reversed order and find a longest \emph{decreasing} subsequence to avoid this issue.}
The running time of this na\"{\i}ve approach is $O(\frac{1}{s} \cdot n^{3} \log n)$.

As we claimed before, we can do much better.
In fact, we need only an additional multiplicative factor of $O(\log n)$ instead of $O(n)$ in the running time,
while keeping the space complexity as it is.
In the rest of this section, we prove the following theorem.

\begin{theorem}
\label{thm:subsequence}
There is an algorithm that, given an integer $s$ satisfying $\sqrt{n} \le s \le n$ and a sequence $\sseq$ of length $n$,  computes
a longest increasing subsequence of $\sseq$
 in $O(\frac{1}{s} \cdot n^{2} \log^{2} n)$ time using $O(s \log n)$ bits of space.
\end{theorem}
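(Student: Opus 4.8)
The plan is to reduce the problem of finding an entire longest increasing subsequence to recursively locating a single ``near-middle'' element, so that the recursion has depth $O(\log n)$ and each level costs $O(\frac{1}{s} \cdot n^2 \log n)$ time within $O(s \log n)$ space. Concretely, suppose we want a longest increasing subsequence of $\sseq$ of length $L = \lis(\sseq)$. First I would run the length algorithm of Theorem~\ref{thm:length} to obtain $L$. Then, using a variant of Lemma~\ref{lem:ignoring_piles}, I would identify an element $\sseq(m)$ lying in a pile $P_j$ with $j$ close to $L/2$ (within a factor depending on $s$, which is why we say \emph{near-mid} rather than exactly the middle) such that $\sseq(m)$ extends to a longest increasing subsequence, i.e.\ there is an increasing subsequence of length $j$ ending at $\sseq(m)$ and one of length $L-j+1$ starting at $\sseq(m)$; Proposition~\ref{prop:pile-characterization} guarantees the first half, and the symmetric statement (obtained by running {\PS} on the reversed, negated sequence) guarantees the second. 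Having pinned down $\sseq(m)$, the problem splits into two independent subproblems: find a longest increasing subsequence of the prefix $\seq{\sseq(1),\dots,\sseq(m)}$ ending at $\sseq(m)$, and one of the suffix $\seq{\sseq(m),\dots,\sseq(n)}$ starting at $\sseq(m)$. Each subproblem is again an LIS-type problem on a shorter sequence with a prescribed target length (roughly halved), so recursion applies.

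The key steps, in order, would be: (1) show how to compute a near-mid element. Using the machinery behind Lemma~\ref{lem:ignoring_piles}, one pass of {\PS} over $\sseq$ (filtered appropriately) computes all pile sizes up to index $L/2 + O(s)$, and from the explicit pile $P_j$ for the chosen $j$ near $L/2$ one can read off a candidate element. To ensure this element is actually on \emph{some} longest increasing subsequence, run the analogous computation from the right end: an element $\sseq(m) \in P_j$ is on a longest increasing subsequence iff the longest increasing subsequence of the reversed sequence ending at position $n-m+1$ has length exactly $L - j + 1$. Testing this for the $\le s$ candidate elements of $P_j$ costs $O(n \log n)$ per candidate, hence $O(s \cdot n \log n) = O(n^2 \log n)$ total in the worst case when $s$ is large; to keep it at $O(\frac{1}{s} n^2 \log n)$ I would instead do a single combined right-to-left pass that, for each position, records whether it lies in the correct pile, and intersect with $P_j$ — this is one more filtered {\PS} pass, $O(n \log n)$ time, $O(s \log n)$ space. (2) Set up the recursion: maintain a stack of at most $O(\log n)$ subproblems, each specified by an interval of positions, a lower/upper value bound (the fixed endpoints), and a target length; the total size of all active subproblems' data is $O(s \log n)$ since each frame needs only $O(s \log n)$ bits and there are $O(\log n)$ frames — wait, this gives $O(s \log^2 n)$; to avoid the extra log, I would process subproblems one at a time in a careful order (e.g.\ divide-and-conquer that outputs the left half fully before recursing right), so only $O(1)$ frames of $P_i$-sized data plus $O(\log n)$ frames of $O(\log n)$-bit bookkeeping are live, keeping space at $O(s \log n)$. (3) Bound the running time: at recursion depth $d$ the target length is $\approx L/2^d$, but crucially each recursive call still runs a constant number of {\PS}-style passes over (sub)sequences whose \emph{total} length at each depth is $O(n)$, so each depth costs $O(\frac{1}{s} n^2 \log n)$ and the $O(\log n)$ depths give $O(\frac{1}{s} n^2 \log^2 n)$.

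The main obstacle I anticipate is making the ``near-mid'' split genuinely balanced in the \emph{length} parameter while respecting the $O(s)$-pile granularity of our passes, and simultaneously controlling space across the recursion. Because a single pass only reveals pile sizes in a window of width $O(s)$, the split index $j$ we can certify may only be guaranteed to lie in $[L/2 - s, L/2 + s]$; if $L$ is itself $O(s)$ this is no split at all. The fix is a two-regime analysis: while the current target length exceeds, say, $2s$, we genuinely halve it and recurse (depth $O(\log(L/s)) = O(\log n)$); once it drops to $O(s)$, we switch to a direct method — run the length-style algorithm once to get the full pile structure for the $O(s)$ relevant piles explicitly in $O(s \log n)$ space, and extract that portion of the subsequence in a single $O(n \log n)$ pass via the standard back-pointer walk of the correctness proof of {\PS}. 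One must check that the subsequence pieces glued across all recursive calls are consistent (each shares exactly its boundary element with its neighbor) and together have length exactly $L$; this follows because at every split the chosen $\sseq(m)$ satisfies the additive relation $j + (L-j+1) = L+1$ with the shared element counted once. The remaining care is purely bookkeeping: passing the correct value-bounds and position-bounds down so that each subproblem really is an honest LIS instance to which Theorem~\ref{thm:length} and Lemmas~\ref{lem:ignoring_piles}–\ref{lem:compute-a-small-pile} apply verbatim.
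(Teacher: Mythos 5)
Your overall architecture matches the paper's: locate a near-mid element of some longest increasing subsequence by combining a forward run of the windowed {\PS} simulation (via Proposition~\ref{prop:pile-characterization}) with a reverse run that certifies the remaining length, split the instance by position and value bounds, and recurse to depth $O(\log n)$ with a sum-of-squares argument giving $O(\frac{1}{s}n^{2}\log n)$ per level. Two of your cost claims are off but harmless: certifying the second half requires computing the pile of index roughly $\lis(\sseq)/2$ for the \emph{reversed} sequence, which cannot be done in ``one more filtered pass'' with $O(s\log n)$ bits --- it needs the full $O(\lis(\sseq)/s)$-pass reverse length algorithm, costing $O(\frac{1}{s}n^{2}\log n)$, which still fits the per-level budget; similarly, each recursive call runs $O(\lis(\sseq')/s)$ passes rather than ``a constant number,'' which is exactly what the bound $\sum_{\sseq'}|\sseq'|^{2}\le n^{2}$ absorbs.

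The genuine gap is your base case. You propose, once the target length drops to $O(s)$, to ``get the full pile structure for the $O(s)$ relevant piles explicitly in $O(s\log n)$ space'' and extract the answer by a back-pointer walk. This is impossible: the piles partition \emph{all} elements of the current subsequence, so $O(s)$ piles may hold $\Theta(n)$ elements (a decreasing sequence yields a single pile of size $n$), and storing the piles or the back-pointers needed for the walk requires $\Omega(n\log n)$ bits in the worst case. The space-feasible substitute --- keeping only pile tops and rerunning {\PS} once per output element, as in Lemma~\ref{lem:base:case} --- costs $O(k\cdot n'\log k)$ on a subsequence of length $n'$ with $\lis = k$; with your threshold $k=\Theta(s)$ this sums over one recursion level to $O(s\,n\log n)$, which for $s$ close to $n$ is $\Theta(n^{2}\log n)$ and far exceeds the budget $O(\frac{1}{s}n^{2}\log^{2}n)=O(n\log^{2}n)$. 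The repair is to sharpen your localization of the split index: averaging the pile sizes over a window of $n/s$ (not $s$) consecutive indices around $\lis/2$ already yields a pile of size at most $s$ within additive error $n/s$ of the midpoint, so the recursion can be driven down to $\lis(\sseq')=O(|\sseq'|/s)$ before switching to the rerun-per-element base case, whose cost $O(\frac{1}{s}|\sseq'|^{2}\log n)$ then fits the per-level budget. With that adjustment your argument goes through and coincides with the paper's proof.
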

\begin{corollary}
Given a sequence $\sseq$ of length $n$,
a longest increasing subsequence of $\sseq$
can be found in $O(n^{1.5} \log^{2} n)$ time with $O(\sqrt{n} \log n)$ bits of space.
\end{corollary}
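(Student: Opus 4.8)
The plan is to realize a divide-and-conquer on the target subsequence, using the length-algorithm of Theorem~\ref{thm:length} as a black box. Recall that an element $\sseq(i)$ lies in pile $P_k$ exactly when the longest increasing subsequence ending at $\sseq(i)$ has length $k$ (Proposition~\ref{prop:pile-characterization}). So the final pile $P_{\lis(\sseq)}$ contains precisely the possible last elements of a longest increasing subsequence. As already observed in the excerpt, a small modification of Algorithm~\ref{alg:patience-sorting-2} lets us output (say, the leftmost) element of $P_{\lis(\sseq)}$ within the same $O(\frac{1}{s}\cdot n^2\log n)$ time and $O(s\log n)$ space budget: during the last phase we additionally remember one witness index for the top of the highest pile. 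The key new idea is that, rather than peeling off one element at a time from the end, we locate a \emph{near-middle} element of a longest increasing subsequence, fix it, and recurse on the two halves of the sequence that lie respectively before/below and after/above that element.

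Concretely, I would first prove a helper lemma: for any ``window'' of $\sseq$ specified by a left index $a$, a right index $b$, and a value interval $(\mathit{lo},\mathit{hi})$, one can compute, in $O(\frac{1}{s}\cdot n^2\log n)$ time and $O(s\log n)$ bits, the length $L$ of a longest increasing subsequence using only positions in $[a,b]$ with values in $(\mathit{lo},\mathit{hi})$, and moreover output an index $m\in[a,b]$ together with an integer $r$ with $1\le r\le L$ such that $\sseq(m)$ participates as the $r$-th element of some such subsequence and $r$ is as close to $\lceil L/2\rceil$ as the pile structure allows. This is again just Theorem~\ref{thm:length} applied to a restricted instance — the filtering machinery of Lemma~\ref{lem:ignoring_piles} already shows how to ignore elements outside a prescribed range of piles, and restricting to $[a,b]$ and $(\mathit{lo},\mathit{hi})$ only adds $O(1)$ comparisons per scanned element, so neither the time nor the space bound changes. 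Picking $r=\lceil L/2\rceil$ exactly may not be possible because $P_r$ could be empty inside the window, but taking the highest non-empty pile of index at most $\lceil L/2\rceil$ (which is non-empty and has index at least $\lceil L/2\rceil - (\text{something bounded})$ — in fact the piles with indices $1,\dots,L$ are all non-empty, so $r=\lceil L/2\rceil$ works verbatim) gives an exact mid element.

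With this helper in hand, the recursion is: given a window $(a,b,\mathit{lo},\mathit{hi})$ known to contain a longest increasing subsequence of the whole problem of length $L$, compute the mid index $m$ and its rank $r=\lceil L/2\rceil$; recurse on $(a,m-1,\mathit{lo},\sseq(m))$ to produce the first $r-1$ elements, output $\sseq(m)$, then recurse on $(m+1,b,\sseq(m),\mathit{hi})$ to produce the remaining $L-r$ elements. Each recursive call halves the target length, so the recursion tree has depth $O(\log n)$ and $O(1)$ live frames at a time — each frame stores only the four window parameters, i.e.\ $O(\log n)$ bits — so the working space stays $O(s\log n)$. The total work is $\sum_{\text{levels}} (\text{sum of }\frac{1}{s}\cdot n^2\log n\text{ over nodes at that level})$; since the subproblems at one level act on disjoint index ranges whose sizes sum to at most $n$, and the cost of the length algorithm on a range of size $n'$ is $O(\frac{1}{s}\cdot n\cdot n'\cdot\log n)$ (one pass costs $O(n'\log n)$ but there can still be $O(n/s)$ passes, so I should be careful here), each level costs $O(\frac{1}{s}\cdot n^2\log n)$, and with $O(\log n)$ levels we get $O(\frac{1}{s}\cdot n^2\log^2 n)$ in total. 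Because the input is read-only and we re-scan it at every node, this is a sequential-access algorithm, matching the claim in the introduction.

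The main obstacle I anticipate is the running-time bookkeeping for the recursion, specifically making sure the per-level cost really is $O(\frac{1}{s}\cdot n^2\log n)$ and not more. The subtlety is that the length algorithm's $O(n/s)$-many passes over a subinstance of size $n'$ still each take $\Theta(n')$ time on the restricted window (we must skip over the excluded positions, but we still have to read past them, or we must have an efficient way to jump), so the cost on a node covering $n'$ positions is $O(\frac{1}{s}\cdot n\cdot n'\log n)$ — actually I'd want it to be $O(\frac{1}{s}\cdot (n')^2\log n + \ldots)$ to get a clean geometric sum, so I need to argue either that we can restrict the scan to $[a,b]$ (halving index ranges genuinely shrinks the scanned length) or absorb the slack. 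Restricting the scan to $[a,b]$ is fine since the input is an array with random read access in the RAM model; then a node covering $n'$ positions costs $O(\frac{1}{s}\cdot (n')^2\log n)$, the sizes at each level sum to $\le n$, convexity gives each level cost $O(\frac{1}{s}\cdot n^2\log n)$, and summing over the $O(\log n)$ levels yields the theorem. I also need the routine check that fixing $\sseq(m)$ as the $r$-th element does not destroy optimality — this is immediate from Proposition~\ref{prop:pile-characterization} applied within the window together with the fact that any increasing subsequence of length $L$ in the window must pass through pile $P_r$ at exactly one element whose value and position then split the window correctly.
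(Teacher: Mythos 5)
Your overall outline---find a near-middle element of a longest increasing subsequence, fix it, and recurse on the two value/position windows---is exactly the paper's strategy, and your recursion-depth and per-level cost accounting matches the paper's. But there is a genuine gap at the single most important step: producing an index $m$ such that $\sseq(m)$ \emph{actually participates} as the $r$-th element of some longest increasing subsequence of the window. You assert this is ``again just Theorem~\ref{thm:length} applied to a restricted instance'' and later that it is ``immediate from Proposition~\ref{prop:pile-characterization} \dots\ together with the fact that any increasing subsequence of length $L$ must pass through pile $P_r$.'' That argument only shows that \emph{some} element of $P_r$ lies on a longest increasing subsequence; it gives no way to identify \emph{which} one, and an arbitrary element of $P_r$ (e.g.\ the top, or the leftmost) can fail. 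Concretely, for $\sseq = \seq{1,5,2,3}$ we have $P_1=\seq{1}$, $P_2=\seq{5,2}$, $P_3=\seq{3}$ and $\lis(\sseq)=3$; the element $5$ belongs to $P_2$ (the longest increasing subsequence \emph{ending} at $5$ has length $2$), yet $5$ lies on no increasing subsequence of length $3$, so committing to it as the rank-$2$ element makes the right recursion return nothing and the output has length $2$. This is precisely the problem the paper solves by introducing \textsc{Reverse Patience Sorting}: with $Q_j$ the piles of the reversed run, the key lemma states $P_k \cap Q_{\lis(\sseq)-k+1}\ne\emptyset$, and every element of that intersection has a length-$k$ increasing subsequence ending at it \emph{and} a length-$(\lis(\sseq)-k+1)$ one starting at it, hence genuinely splits a longest increasing subsequence. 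The algorithm stores $P_k$ explicitly and streams the elements of $Q_{\lis(\sseq)-k+1}$ past it to find a common element. Your proposal contains no verification or intersection mechanism, so the recursion as written is not correct.

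A secondary inaccuracy: you claim $r=\lceil L/2\rceil$ ``works verbatim'' because all piles $1,\dots,L$ are nonempty. Nonemptiness is not the issue---\emph{size} is. To store $P_r$ (which the intersection step above requires) within $O(s\log n)$ bits you need $|P_r|\le s$, and $P_{\lceil L/2\rceil}$ can be as large as $\Theta(n)$. The paper therefore settles for some $k$ with $\lis/2\le k<\lis/2+n/s$ and $|P_k|\le s$ (such a $k$ exists by averaging), and this slack is exactly why it needs a separate base case when $\lis(\sseq')\le 3|\sseq'|/s$ together with the $\log_{6/5}$ depth bound. Your exact-halving claim hides both of these necessary adjustments.
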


We should point out that the algorithm in this section is \emph{not} a multi-pass algorithm.
However, we can easily transform it without any increase in the time and space complexity
so that it works as a sequential access algorithm.

\subsection{High-level idea}
We first find an element that is in a longest increasing subsequence roughly in the middle.
As we will argue, this can be done in $O(\frac{1}{s} \cdot n^{2} \log n)$ time with $O(s \log n)$ bits
by running the algorithm from the previous section twice, once in the ordinary then once in the reversed way. 
We then divide the input into the left and right parts at a near-mid element and recurse.

The space complexity remains the same and 
the time complexity increases only by an $O(\log n)$ multiplicative factor.
The depth of recursion is $O(\log n)$ and at each level of recursion the total running time is
$O(\frac{1}{s} \cdot n^{2} \log n)$.
To remember the path to the current recursion, 
we need some additional space, but it is bounded by $O(\log^{2} n)$ bits.

\subsection{A subroutine for short longest increasing sequences}
We first solve the base case in which $\lis(\sseq) \in O(n/s)$.
In this case, we use the original {\PS} and repeat it $O(n/s)$ times.
We present the following general form first.
\begin{lemma}\label{lem:base:case}
Let $\sseq$ be a sequences of length $n$ and $\lis(\sseq) = k$.
Then a longest increasing subsequence of $\sseq$ 
can be found in $O(k \cdot n \log k)$ time
with $O(k \log n)$ bits.
\end{lemma}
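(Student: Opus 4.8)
The plan is to run the classic \PS\ algorithm in a memory-bounded way, restarting it $k$ times to peel off the elements of a longest increasing subsequence one pile at a time. Recall that by Proposition~\ref{prop:pile-characterization}, the last element of some longest increasing subsequence is exactly an element of $P_{k}$, and more generally the extraction procedure from Section~\ref{sec:ps} tells us that once we have fixed the element $\sseq(h)\in P_{i+1}$ of the subsequence, the next element (going backwards) is the element that was on top of $P_{i}$ at the moment $\sseq(h)$ was pushed. So the strategy is: first determine $\ell := \lis(\sseq) = k$ and an element of $P_{k}$; then repeatedly, knowing the already-selected element $\sseq(h)$ sitting in pile $P_{i+1}$, re-scan the input with \PS\ to find which element was on top of $P_{i}$ when index $h$ was processed, output it, and decrement $i$.

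First I would observe that a single run of \PS\ that maintains only the top element of each pile (together with each pile's size, if needed) uses $O(\ell \log n) = O(k \log n)$ bits and $O(n \log \ell)$ time, since each of the $n$ insertions does a binary search over at most $\ell$ piles; this already matches the claimed bounds for one pass. To get an element of $P_{k}$: run \PS\ once to learn $k = \ell$, then run it again and record, as a candidate, the first element ever pushed to pile number $k$; that element is $P_{k}(1)$, which lies in $P_{k}$ as required. For the inductive step, suppose the currently selected subsequence element is $\sseq(h)$, known to reside in $P_{i+1}$, and we want the preceding element. I would re-run \PS\ from scratch, and when the scan reaches index $h$ (i.e. is about to push $\sseq(h)$), read off $\mathtt{top}(P_{i})$ at that instant — this is precisely the element $\sseq(h')$ prescribed by step~2 of the extraction procedure, and we have $h' < h$ and $\sseq(h') < \sseq(h)$. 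Record its value and its index, set $\sseq(h) := \sseq(h')$, decrement $i$, and repeat until $i = 0$. Each such re-run again costs $O(n\log k)$ time and $O(k\log n)$ bits, and there are $k$ of them, giving total time $O(k \cdot n \log k)$ and space $O(k\log n)$, as claimed. The elements are produced in reverse order, which is acceptable (or can be reversed via the decreasing-subsequence trick of the earlier footnote).

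One technical point I would be careful about is the handling of repeated elements: as the paper already stipulates, each pile entry is stored together with its index, so "the element $\sseq(h)$ is in $P_{i+1}$" is an unambiguous statement about a specific occurrence, and when re-running \PS\ we can unambiguously identify the step at which that particular occurrence is processed (it is simply the $h$-th iteration of the for-loop). Storing one index and one value per recorded subsequence element is $O(\log n)$ bits, and we only ever need the most recently found one plus the running pointer $i$, so nothing beyond $O(k\log n)$ bits is needed across the whole computation. The only mild obstacle is bookkeeping the invariant cleanly — that after processing pile $P_{i+1}$ the stored element genuinely lies in $P_{i}$ — but this is exactly Observation~\ref{obs:increasing-top} together with the correctness argument for \PS\ already given, so no new idea is required. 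The lemma then follows.
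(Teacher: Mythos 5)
Your proposal is correct and follows essentially the same strategy as the paper: run a top-elements-only version of {\PS} about $k$ times, each run costing $O(n\log k)$ time and $O(k\log n)$ bits, and peel off one element of the subsequence per run in reverse order. The only difference is cosmetic --- the paper's successive runs filter the input by an upper bound and a prefix and output an element of the final pile of the restricted run, whereas you pause a fresh run of {\PS} at the known index $h$ and read $\mathtt{top}(P_{i})$ at that instant, which is a direct implementation of the extraction procedure from Section~\ref{sec:ps}; both yield the claimed bounds.
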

\begin{proof}
Without changing the time and space complexity,
we can modify the original {\PS} so that
\begin{itemize}
  \item it maintains only the top elements of the piles;
  \item it ignores the elements larger than or equal to a given upper bound; and
  \item it outputs an element in the final pile.
\end{itemize}

We run the modified algorithm $\lis(\sseq)$ times.
In the first run, we have no upper bound.
In the succeeding runs, we set the upper bound to be the output of the previous run.
In each run the input to the algorithm is the initial part of the sequence that ends right before the last output.
The entire output forms a longest increasing sequence of $\sseq$.\footnote{%
Again this output is reversed. 
We can also compute the output in nonreversed order as discussed before.}

Since $\lis(\sseq) = k$, modified {\PS} maintains only $k$ piles.
Thus each run takes $O(n \log k)$ time and uses $O(k \log n)$ bits.
The lemma follows since this is repeated $k$ times and
 each round only stores $O(\log n)$ bits of information from the previous round.
\end{proof}

The following special form of the lemma above holds
since $n /s \le s$ when $s \ge \sqrt{n}$.
\begin{corollary}
Let $\sseq$ be a sequence of length $n$ and $\lis(\sseq) \in O(n/s)$ for some $s$ with $\sqrt{n} \le s \le n$.
A longest increasing subsequence of $\sseq$ 
can be found in $O(\frac{1}{s} \cdot n^{2} \log n)$ time with $O(s \log n)$ bits.
\end{corollary}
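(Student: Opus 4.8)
The plan is to prove the special-form corollary by a direct instantiation of Lemma~\ref{lem:base:case}, simply checking that the hypotheses line up and the asymptotic bounds collapse to the claimed form under the assumption $\sqrt{n}\le s\le n$. First I would set $k=\lis(\sseq)$ and invoke the hypothesis $k\in O(n/s)$, so that there is a constant $c$ with $k\le c\cdot n/s$. Lemma~\ref{lem:base:case} then yields an algorithm finding a longest increasing subsequence in $O(k\cdot n\log k)$ time with $O(k\log n)$ bits, and the task reduces to bounding these two quantities.

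For the space bound, substitute $k\le c\cdot n/s$ to get $O(k\log n)=O(\frac{n}{s}\log n)$. Here I would use the key inequality $n/s\le s$, which holds precisely because $s\ge\sqrt{n}$ (squaring gives $n\le s^2$); hence $O(\frac{n}{s}\log n)=O(s\log n)$, as required. For the time bound, $O(k\cdot n\log k)=O(\frac{n}{s}\cdot n\cdot\log k)=O(\frac{1}{s}\cdot n^2\cdot\log k)$, and since $k\le n$ we have $\log k\le\log n$, giving $O(\frac{1}{s}\cdot n^2\log n)$. That is exactly the claimed running time, so the corollary follows.

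There is essentially no hard part here: the statement is a corollary precisely because it is Lemma~\ref{lem:base:case} with the parameter specialized and the bounds simplified. The only thing worth being careful about is the direction of the inequality $n/s\le s$ and making sure it is the assumption $s\ge\sqrt n$ (rather than $s\le n$) that is doing the work in the space bound; the upper bound $s\le n$ is only needed to ensure the regime is nonempty and that $O(s\log n)$ is itself a meaningful (sub-$n\log n$ or equal) space budget. I would also note in passing that the $k\le n$ step used to replace $\log k$ by $\log n$ is harmless since $\lis(\sseq)$ never exceeds the length of $\sseq$.
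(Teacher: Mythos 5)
Your proposal is correct and matches the paper's own (one-line) justification exactly: the corollary is obtained by instantiating Lemma~\ref{lem:base:case} with $k\in O(n/s)$ and using $n/s\le s$ (which follows from $s\ge\sqrt{n}$) to absorb the space bound into $O(s\log n)$, with $\log k\le\log n$ handling the time bound. No issues.
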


\subsection{A key lemma}

As mentioned above, we use a reversed version of our algorithm.
{\RPS} is the reversed version of {\PS}:
it reads the input from right to left and uses the reversed inequalities.
 (See Algorithm~\ref{alg:reversed-patience-sorting}.)
{\RPS} computes the length of a longest decreasing subsequence in the reversed sequence,
which is a longest increasing subsequence in the original sequence.
Since the difference between the two algorithms is small,
we can easily modify our algorithm in Section~\ref{sec:length}
for the length so that it simulates {\RPS} instead of {\PS}.
\begin{algorithm}
  \caption{{\RPS}}
  \label{alg:reversed-patience-sorting}
  \begin{algorithmic}[1]
    \State set $\ell := 0$ and initialize the dummy pile $Q_{0}$ with the single element $+\infty$
    \For{$i=n$ \textbf{to} $1$} 
        \If{$\sseq(i) < \mathtt{top}(Q_{\ell})$}
            \State increment $\ell$, let $Q_{\ell}$ to be a new empty pile, and set $j := \ell$ 
	\Else
	    \State set $j$ to be the smallest index with $\sseq(i) > \mathtt{top}(Q_{j})$
	\EndIf
	\State push $\sseq(i)$ to $Q_{j}$
    \EndFor
    \State \Return $\ell$
  \end{algorithmic}
\end{algorithm}

Let $Q_{i}$ be the $i$th pile constructed by {\RPS} as in Algorithm~\ref{alg:reversed-patience-sorting}.
Using Proposition~\ref{prop:pile-characterization},
we can show that for each $\sseq(i)$ in $Q_{j}$,
the longest decreasing subsequence of the reversal of $\sseq$ ending at $\sseq(i)$ has length $j$.
This is equivalent to the following observation.
\begin{observation}
  $\sseq(i) \in Q_{j}$ if and only if
  a longest increasing subsequence of $\sseq$ starting at $\sseq(i)$ has length $j$.
\end{observation}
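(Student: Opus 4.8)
The plan is to deduce this observation from Proposition~\ref{prop:pile-characterization} by a simple change of variables that turns {\RPS} on $\sseq$ into {\PS} on a transformed sequence.

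First I would set up the transformation. Let $\sseq'$ be the sequence of length $n$ with $\sseq'(k) = -\sseq(n+1-k)$ (carrying along the original index $n+1-k$, as in Section~\ref{sec:length}, so that repeated values remain distinguishable). Negating and reversing is length-preserving and order-reversing, so it induces a bijection between increasing subsequences of $\sseq$ and increasing subsequences of $\sseq'$: a witnessing index set $i_{1}<\dots<i_{\ell}$ with $\sseq(i_{1})<\dots<\sseq(i_{\ell})$ corresponds, via $k\mapsto n+1-k$ and reversal of the list, to the index set $n+1-i_{\ell}<\dots<n+1-i_{1}$ with $\sseq'(n+1-i_{\ell})<\dots<\sseq'(n+1-i_{1})$. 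Under this bijection the increasing subsequences of $\sseq$ that \emph{start} at position $i$ correspond exactly to the increasing subsequences of $\sseq'$ that \emph{end} at position $n+1-i$, and their lengths agree. Hence a longest increasing subsequence of $\sseq$ starting at $\sseq(i)$ has length $j$ if and only if a longest increasing subsequence of $\sseq'$ ending at $\sseq'(n+1-i)=-\sseq(i)$ has length $j$.

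Next I would show that {\RPS} run on $\sseq$ builds the same piles as {\PS} run on $\sseq'$, in the precise sense that $\sseq(i)\in Q_{j}$ if and only if $\sseq'(n+1-i)$ lies in the $j$th pile produced by {\PS} on $\sseq'$. Since {\RPS} reads $\sseq$ from right to left while {\PS} reads $\sseq'$ from left to right, the two runs examine corresponding entries in the same order, and the proof is an induction on the number of processed entries with the invariant that after each step the two pile collections have the same number of piles and $\mathtt{top}(Q_{k})$ equals $-1$ times the top of the $k$th pile of {\PS} on $\sseq'$. Under this invariant, the test ``$\sseq(i)<\mathtt{top}(Q_{\ell})$'' that creates a new pile in {\RPS} is literally the test ``$\sseq'(n+1-i)>\mathtt{top}(P_{\ell})$'' that creates a new pile in {\PS}, and the rule selecting the pile to push onto translates in the same way; the decreasing-top property of the $Q_{k}$ (the analogue of Observation~\ref{obs:increasing-top}) is inherited from Observation~\ref{obs:increasing-top} applied to $\sseq'$ and is exactly what makes that selection rule well defined.

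Finally, chaining the two equivalences with Proposition~\ref{prop:pile-characterization} applied to $\sseq'$ yields: $\sseq(i)\in Q_{j}$ $\iff$ $\sseq'(n+1-i)$ is in the $j$th pile of {\PS} on $\sseq'$ $\iff$ a longest increasing subsequence of $\sseq'$ ending at $\sseq'(n+1-i)$ has length $j$ $\iff$ a longest increasing subsequence of $\sseq$ starting at $\sseq(i)$ has length $j$, which is the claim. I expect the only genuinely delicate point to be the bookkeeping around repeated values: one must check that the strict and non-strict inequalities in {\PS} and {\RPS} correspond correctly under negation, and the index-tagging convention from Section~\ref{sec:length} is the tool that handles this; the rest is routine.
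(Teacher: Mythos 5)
Your argument is correct and is essentially the paper's own proof: the paper likewise treats {\RPS} as {\PS} run on the reversed (order-flipped) sequence and invokes Proposition~\ref{prop:pile-characterization} for the transformed input, merely phrasing the transformation as ``longest decreasing subsequence of the reversal ending at $\sseq(i)$'' rather than via explicit negation. Your extra care about how the strict and non-strict comparisons in Algorithms~\ref{alg:patience-sorting} and~\ref{alg:reversed-patience-sorting} correspond under negation (which matters exactly when the input has repeated values) is a real detail the paper's one-line justification glosses over, and your negation trick resolves it correctly.
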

This observation immediately gives the key lemma below.
\begin{lemma}
$P_{k} \cap Q_{\lis(\sseq) - k + 1} \ne \emptyset$ for all~$k$ with $1 \le k \le \lis(\sseq)$.
\end{lemma}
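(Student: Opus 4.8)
The plan is to exhibit an element that lies in both piles simultaneously. Fix $k$ with $1 \le k \le \lis(\sseq)$ and let $m = \lis(\sseq)$. The idea is to take a longest increasing subsequence $\sseq[i_1, \dots, i_m]$ of $\sseq$ witnessing the full length $m$, and to look at its $k$th element $\sseq(i_k)$. I claim this single element is the desired witness: $\sseq(i_k) \in P_k$ and $\sseq(i_k) \in Q_{m-k+1}$.

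First I would show $\sseq(i_k) \in P_k$. The prefix $\sseq[i_1, \dots, i_k]$ is an increasing subsequence of $\sseq$ of length $k$ ending at $\sseq(i_k)$, so by Proposition~\ref{prop:pile-characterization} the longest increasing subsequence of $\sseq$ ending at $\sseq(i_k)$ has length \emph{at least} $k$; and it has length \emph{at most} $k$, since otherwise we could extend it by the suffix $\sseq[i_{k+1}, \dots, i_m]$ (which is increasing, starts after $i_k$, and whose first element $\sseq(i_{k+1}) > \sseq(i_k)$) to get an increasing subsequence of $\sseq$ of length strictly greater than $m$, contradicting $\lis(\sseq) = m$. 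Hence the longest increasing subsequence ending at $\sseq(i_k)$ has length exactly $k$, so $\sseq(i_k) \in P_k$ by Proposition~\ref{prop:pile-characterization}. Symmetrically, using the observation just stated (that $\sseq(i) \in Q_j$ iff the longest increasing subsequence starting at $\sseq(i)$ has length $j$), the suffix $\sseq[i_k, \dots, i_m]$ is an increasing subsequence of length $m - k + 1$ starting at $\sseq(i_k)$, and it cannot be longer, for otherwise prepending $\sseq[i_1, \dots, i_{k-1}]$ would again beat $m$. Therefore $\sseq(i_k) \in Q_{m-k+1}$, so $P_k \cap Q_{m-k+1} \ne \emptyset$.

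I do not expect any serious obstacle here; the argument is a clean application of Proposition~\ref{prop:pile-characterization} and its $Q$-analogue. The only point requiring mild care is the boundary cases $k = 1$ and $k = m$, where one of the two sub-subsequences ($\sseq[i_1,\dots,i_{k-1}]$ or $\sseq[i_{k+1},\dots,i_m]$) is empty; the same reasoning goes through with the convention that prepending or appending the empty sequence does nothing, so the "cannot be longer" step is either vacuous or trivially fine. One should also note that a longest increasing subsequence of length exactly $m = \lis(\sseq)$ exists by definition of $\lis$, so the object $\sseq[i_1,\dots,i_m]$ we start from is guaranteed to exist.
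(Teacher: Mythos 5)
Your proposal is correct and follows essentially the same route as the paper: take a longest increasing subsequence, show its $k$th element lies in $P_k$ via Proposition~\ref{prop:pile-characterization} and in $Q_{\lis(\sseq)-k+1}$ via the analogous characterization of the $Q$-piles. You merely spell out more explicitly why the longest increasing subsequence ending (resp.\ starting) at $\sseq(i_k)$ has length exactly $k$ (resp.\ $\lis(\sseq)-k+1$), a step the paper leaves implicit.
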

\begin{proof}
Let $\seq{\sseq(i_{1}), \dots, \sseq(i_{\ell})}$ be a longest increasing subsequence of $\sseq$.
Proposition~\ref{prop:pile-characterization} implies that $\sseq(i_{k}) \in P_{k}$.
The subsequence $\seq{\sseq(i_{k}), \dots, \sseq(i_{\ell})}$
 is a longest increasing subsequence of $\sseq$ starting at $\sseq(i_{k})$
since otherwise $\seq{\sseq(i_{1}), \dots, \sseq(i_{\ell})}$ is not longest.
Since the length of $\seq{\sseq(i_{k}), \dots, \sseq(i_{\ell})}$ is $i_{\ell} - k + 1 = \lis(\sseq) - k + 1$,
we have $\sseq(k) \in Q_{\lis(\sseq) - k + 1}$.
\end{proof}

Note that the elements of $P_{k}$ and $Q_{\lis(\sseq) - k + 1}$ are not the same in general.
For example, by applying $\RPS$ to $\sseq_{1} = \seq{2, 8, 4, 9, 5, 1, 7, 6, 3}$,
we get $Q_{1} = \seq{3,6,7,9}$, $Q_{2} = \seq{1,5,8}$, $Q_{3} = \seq{4}$, and $Q_{4} = \seq{2}$ as below. 
(Recall that $P_{1} = \seq{2,1}$, $P_{2} = \seq{8,4,3}$, $P_{3} = \seq{9,5}$, and $P_{4} = \seq{7,6}$.)
The following diagram depicts the situation. The elements shared by $P_{k}$ and $Q_{\lis(\sseq) - k + 1}$ are colored and underlined.

\smallskip

{
  \small
  \tabcolsep=0.5mm
  \noindent
  \begin{tabular}{c}
    \\
    \\
    \\
    \textbf{3} \\
    \hline
    $Q_{1}$
  \end{tabular}
  \hfill
  \begin{tabular}{c}
    \\
    \\
    \textbf{6} \\
    3 \\
    \hline
    $Q_{1}$
  \end{tabular}
  \hfill
  \begin{tabular}{c}
    \\
    \textbf{7} \\
    6 \\
    3 \\
    \hline
    $Q_{1}$
  \end{tabular}
  \hfill
  \begin{tabular}{cc}
    \\
    7 \\
    6 \\
    3 & \textbf{1} \\
    \hline
    $Q_{1}$ & $Q_{2}$
  \end{tabular}
  \hfill
  \begin{tabular}{cc}
    \\
    7 \\
    6 & \textbf{5} \\
    3 & 1 \\
    \hline
    $Q_{1}$ & $Q_{2}$
  \end{tabular}
  \hfill
  \begin{tabular}{cc}
    \textbf{9} \\
    7 \\
    6 & 5 \\
    3 & 1 \\
    \hline
    $Q_{1}$ & $Q_{2}$
  \end{tabular}
  \hfill
  \begin{tabular}{ccc}
    9 \\
    7 \\
    6 & 5 \\
    3 & 1 & \textbf{4} \\
    \hline
    $Q_{1}$ & $Q_{2}$ & $Q_{3}$
  \end{tabular}
  \hfill
  \begin{tabular}{ccc}
    9 \\
    7 & \textbf{8} \\
    6 & 5 \\
    3 & 1 & 4 \\
    \hline
    $Q_{1}$ & $Q_{2}$ & $Q_{3}$
  \end{tabular}
  \hfill
  \begin{tabular}{cccc}
    9 \\
    \textcolor{vr}{\underline{7}} & 8 \\
    \textcolor{vr}{\underline{6}} & \textcolor{vr}{\underline{5}} \\
    3 & 1 & \textcolor{vr}{\underline{4}} & \textcolor{vr}{\textbf{\underline{2}}} \\
    \hline
    $Q_{1}$ & $Q_{2}$ & $Q_{3}$ & $Q_{4}$
  \end{tabular}
  \hfill\hfill\hfill\hfill
  \begin{tabular}{cccc}
    \\
      & 3 &   &   \\
    1 & \textcolor{vr}{\underline{4}} & \textcolor{vr}{\underline{5}} & \textcolor{vr}{\underline{6}} \\
    \textcolor{vr}{\underline{2}} & 8 & 9 & \textcolor{vr}{\underline{7}} \\
    \hline
    $P_{1}$ & $P_{2}$ & $P_{3}$ & $P_{4}$
  \end{tabular}
}

\subsection{The algorithm}

We first explain the subroutine for finding a near-mid element in a longest increasing subsequence.
\begin{lemma}
Let $s$ be an integer satisfying $\sqrt{n} \le s \le n$.
Given a sequence $\sseq$ of length $n$,
the $k$th element of a longest increasing subsequence of $\sseq$
for some $k$ with $\lis(\sseq)/2 \le k < \lis(\sseq)/2 + n/s$
can be found in $O(\frac{1}{s} \cdot n^{2} \log n)$ time using $O(s \log n)$ bits of space.
\end{lemma}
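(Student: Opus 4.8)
The plan is to use the key lemma that $P_k \cap Q_{\lis(\sseq)-k+1} \ne \emptyset$ together with the length-computing algorithm of Theorem~\ref{thm:length} applied in both the forward and reverse directions. First I would run the algorithm of Theorem~\ref{thm:length} on $\sseq$ to compute $\ell := \lis(\sseq)$; this costs $O(\frac{1}{s} \cdot n^2 \log n)$ time and $O(s\log n)$ bits. Set $k := \lceil \ell/2 \rceil$, so $\ell/2 \le k \le \ell/2 + 1 \le \ell/2 + n/s$, and our target is to output the $k$th element of some longest increasing subsequence. By Proposition~\ref{prop:pile-characterization} such an element is exactly a common element of $P_k$ and $Q_{\ell - k + 1}$, and by the key lemma this intersection is nonempty.

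The main work is therefore to identify an element of $P_k \cap Q_{\ell-k+1}$ within the prescribed budget. The idea is to enumerate the elements of $P_k$ one at a time and, for each candidate, test membership in $Q_{\ell-k+1}$. Concretely, I would run the forward length-algorithm of Section~\ref{sec:length} (as captured by Algorithm~\ref{alg:patience-sorting-2}) until it has advanced to a pile $P_i$ with $i \le k \le i+2s$; at that point, rather than just recording sizes, I invoke Lemma~\ref{lem:ignoring_piles}/\ref{lem:compute-a-small-pile}-style bookkeeping to simulate {\PS} through pile $P_k$ while, each time an element is pushed onto $P_k$, checking whether that same element $\sseq(x)$ lies in $Q_{\ell-k+1}$. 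The membership test ``$\sseq(x) \in Q_{\ell-k+1}$?'' is, by the Observation preceding the key lemma, equivalent to ``a longest increasing subsequence of $\sseq$ starting at $\sseq(x)$ has length $\ell-k+1$'', which in turn is decidable by running the reversed length-algorithm ({\RPS}-based) on the suffix $\seq{\sseq(x),\dots,\sseq(n)}$ and checking whether the answer equals $\ell-k+1$. The first candidate passing the test is output.

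The obstacle is controlling the running time, since naively running an $O(\frac{1}{s}\cdot n^2\log n)$-time reverse computation for every element of $P_k$ could blow up the bound by a factor of $|P_k|$, which need not be $O(s)$. The fix is to note that along any single forward pass of the length-algorithm we only ever store $O(s)$ pile tops and sizes at once, and we need only one successful candidate; more importantly, the reverse membership certificate can be obtained for \emph{all} relevant candidates in a single reverse sweep. Specifically, I would instead first compute, via the reverse algorithm, the pile $Q_{\ell-k+1}$ explicitly when it is small (using the analogue of Lemma~\ref{lem:compute-a-small-pile}), or — if it is large — compute $P_k$ explicitly instead and then test its (few) elements against a single reverse length-computation. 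Since $k$ is near $\ell/2$ we cannot guarantee a priori that either $P_k$ or $Q_{\ell-k+1}$ is small, so the cleaner route is: run the forward algorithm keeping track of $|P_k|$; if $|P_k| \le s$, compute $P_k$ explicitly by Lemma~\ref{lem:compute-a-small-pile} and then, for each of its at most $s$ elements in increasing order of position, run one reverse length computation on the corresponding suffix, stopping at the first element whose reverse-LIS value is $\ell-k+1$; symmetrically if $|Q_{\ell-k+1}| \le s$. At least one of $|P_k|, |Q_{\ell-k+1}|$ is small is \emph{not} automatic, so the actual argument must instead interleave the two sweeps so that the shared element is found after $O(n\log n)$ work per ``layer'' and $O(n/s)$ layers, giving the claimed bound; pinning down this interleaving so that it runs in $O(\frac{1}{s}\cdot n^2\log n)$ time with only $O(s\log n)$ bits is the crux of the proof.
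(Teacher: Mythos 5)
There is a genuine gap, and it is precisely the point you flag as ``the crux'' at the end: your proposal never resolves it. The missing idea is the reason the lemma allows $k$ to range over a window of width $n/s$ rather than demanding the exact midpoint. The piles are pairwise disjoint, so $\sum_i |P_i| \le n$; hence among the $n/s$ consecutive piles $P_i$ with $\lis(\sseq)/2 \le i < \lis(\sseq)/2 + n/s$, at least one has size at most $s$ by averaging. The paper's proof \emph{chooses} $k$ to be such an index (a trivial modification of Algorithm~\ref{alg:patience-sorting-2}, which already computes these sizes), so that $P_k$ can be stored explicitly in $O(s \log n)$ bits via Lemma~\ref{lem:compute-a-small-pile}. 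By fixing $k := \lceil \ell/2 \rceil$ at the outset you forfeit this guarantee: $|P_k|$ may be $\Theta(n)$, and from that point on none of your fallbacks works --- you correctly observe that neither $|P_k| \le s$ nor $|Q_{\ell-k+1}| \le s$ is automatic, and the ``interleaving'' you appeal to is left entirely unspecified. A reviewer cannot accept ``pinning down this interleaving \ldots is the crux of the proof'' as a proof of the crux.

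Once $k$ is chosen so that $|P_k| \le s$, the second half is straightforward and does not require per-candidate reverse runs: store $P_k$, then run the reversed length algorithm \emph{once}, modified so that it enumerates the elements of $Q_{\ell-k+1}$ as they are pushed (without storing that pile, whose size is unbounded). Each enumerated element is tested for membership in $P_k$; since $P_k$ is sorted and the elements of $Q_{\ell-k+1}$ arrive in increasing order, a two-pointer merge does all the tests in total time proportional to the reverse run itself. The key lemma guarantees the intersection is nonempty, so some element is found, and by Proposition~\ref{prop:pile-characterization} it is the $k$th element of a longest increasing subsequence. Your proposal contains fragments of this (the single reverse sweep, the membership characterization via $P_k \cap Q_{\ell-k+1}$), but without the averaging argument for the choice of $k$ the space bound and the running time both fail, so the argument as written does not establish the lemma.
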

\begin{proof}
We slightly modify Algorithm~\ref{alg:patience-sorting-2} 
so that it finds an index $k$ and outputs $P_{k}$ such that $|P_{k}| \le s$ and 
$\lis(\sseq)/2 \le k \le \lis(\sseq)/2 + n/s$.
Such a $k$ exists since the average of $|P_{i}|$ for 
$\lis(\sseq)/2 \le i < \lis(\sseq)/2 + n/s$ is at most $s$.
The time and space complexity of this phase are as required by the lemma.

We now find an element in $P_{k} \cap Q_{\lis(\sseq) - k + 1}$.
Since the size $|Q_{\lis(\sseq) - k + 1}|$ is not bounded by~$O(s)$ in general,
we cannot store $Q_{\lis(\sseq) - k + 1}$ itself.
Instead use the reversed version of the algorithm in Section~\ref{sec:length} to enumerate it.
Each time we find an element in $Q_{\lis(\sseq) - k + 1}$,
 we check whether it is included in $P_{k}$.
This can be done with no loss in the running time since $P_{k}$ is sorted and
the elements of $Q_{\lis(\sseq) - k + 1}$ arrive in increasing order.
\end{proof}

The next technical but easy lemma allows us to split the input into two parts at an element of a longest increasing subsequence
and to solve the smaller parts independently. 
\begin{lemma}
\label{lem:recurse}
Let $\sseq(j)$ be the $k$th element of a longest increasing subsequence of a sequence~$\sseq$.
Let $\sseq_{L}$ be the subsequence of $\sseq\subseq{1,\dots,j-1}$
formed by the elements smaller than $\sseq(j)$.
Similarly let $\sseq_{R}$ be the subsequence of $\sseq\subseq{j+1,\dots,|\sseq|}$
formed by the elements larger than $\sseq(j)$.
Then, a longest increasing subsequence of $\sseq$ can be obtained by concatenating
a longest increasing subsequence of $\sseq_{L}$, $\sseq(j)$, and
a longest increasing subsequence of $\sseq_{R}$, in this order.
\end{lemma}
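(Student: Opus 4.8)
The plan is to prove both directions: that the concatenation described is an increasing subsequence of $\sseq$ of the stated length, and that no increasing subsequence of $\sseq$ can be longer.

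First I would check that the concatenation is indeed an increasing subsequence. The indices used for $\sseq_{L}$ all lie in $\{1,\dots,j-1\}$, then comes index $j$, then the indices for $\sseq_{R}$ all lie in $\{j+1,\dots,|\sseq|\}$, so the index sequence is strictly increasing. For the values: every element of $\sseq_{L}$ is, by construction, strictly smaller than $\sseq(j)$, so in particular the last element of a longest increasing subsequence of $\sseq_{L}$ is $< \sseq(j)$; and every element of $\sseq_{R}$ is strictly larger than $\sseq(j)$, so the first element of a longest increasing subsequence of $\sseq_{R}$ is $> \sseq(j)$. Hence the concatenation is strictly increasing in value as well.

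Next I would pin down the length. Write $\ell = \lis(\sseq)$ and let $\seq{\sseq(i_{1}),\dots,\sseq(i_{\ell})}$ be a longest increasing subsequence with $\sseq(i_{k}) = \sseq(j)$, i.e.\ $i_{k}=j$. The prefix $\seq{\sseq(i_{1}),\dots,\sseq(i_{k-1})}$ is an increasing subsequence of $\sseq_{L}$ (its indices are $< j$ and its values are $< \sseq(j)$), so $\lis(\sseq_{L}) \ge k-1$; symmetrically $\lis(\sseq_{R}) \ge \ell - k$. Therefore the concatenation has length $\lis(\sseq_{L}) + 1 + \lis(\sseq_{R}) \ge \ell$. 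Since it is an increasing subsequence of $\sseq$, its length is also $\le \ell$, so it has length exactly $\ell$, and moreover $\lis(\sseq_{L}) = k-1$ and $\lis(\sseq_{R}) = \ell - k$.

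The only place requiring a little care — and the step I expect to be the mild obstacle — is the converse inequality needed to conclude that the concatenation is genuinely optimal, i.e.\ that $\lis(\sseq_{L}) + 1 + \lis(\sseq_{R}) \le \ell$. This follows because, given any increasing subsequence $A$ of $\sseq_{L}$ and any increasing subsequence $B$ of $\sseq_{R}$, the concatenation $A, \sseq(j), B$ is an increasing subsequence of $\sseq$ by exactly the argument in the first paragraph, hence has length at most $\ell = \lis(\sseq)$; taking $A$ and $B$ to be longest gives the bound. Combining the two inequalities yields that the concatenation attains length $\ell$, which is precisely the claim. The remaining details are routine bookkeeping of indices and values.
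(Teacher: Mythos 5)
Your proof is correct and follows essentially the same route as the paper: verify that the concatenation is an increasing subsequence of $\sseq$, then obtain $\lis(\sseq_{L}) \ge k-1$ and $\lis(\sseq_{R}) \ge \lis(\sseq)-k$ by splitting a witness longest increasing subsequence at $\sseq(j)$. The extra converse inequality you discuss is already implied by the concatenation being an increasing subsequence of $\sseq$, so no additional argument is needed.
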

\begin{proof}
Observe that the concatenated sequence is an increasing subsequence of $\sseq$.
Thus it suffices to show that $\lis(\sseq_{L}) + \lis(\sseq_{R}) + 1 \ge \lis(\sseq)$.
Let $\sseq\subseq{i_{1}, \dots, i_{\lis(\sseq)}}$ be 
a longest increasing subsequence of $\sseq$ such that $i_{k} = j$.
From the definition, $\sseq\subseq{i_{1}, \dots, i_{k-1}}$ is a subsequence of $\sseq_{L}$,
and $\sseq\subseq{i_{k+1}, \dots, i_{\lis(\sseq)}}$ is a subsequence of $\sseq_{R}$.
Hence $\lis(\sseq_{L}) \ge k-1$ and $\lis(\sseq_{R}) \ge \lis(\sseq)-k$,
and thus $\lis(\sseq_{L}) + \lis(\sseq_{R}) + 1 \ge \lis(\sseq)$.
\end{proof}
As Lemma~\ref{lem:recurse} suggests, after finding a near-mid element $\sseq(k)$, 
we recurse into $\sseq_{L}$ and $\sseq_{R}$.
If the input $\sseq'$ to a recursive call has small $\lis(\sseq')$, 
we directly compute a longest increasing subsequence.
See Algorithm~\ref{alg:actual-lis} for details of the whole algorithm.
Correctness follows from Lemma~\ref{lem:recurse}
and correctness of the subroutines.
\begin{algorithm}
  \caption{Recursively finding a longest increasing subsequence of $\rho$}
  \label{alg:actual-lis}
  \begin{algorithmic}[1]
    \State \Call{RecursiveLIS}{$\rho$, $-\infty$, $+\infty$}
    \Procedure{RecursiveLIS}{$\sseq$, $\mathsf{lb}$, $\mathsf{ub}$}
    \State $\sseq' := $ the subsequence of $\sseq$ formed by the elements $\sseq(i)$ such that $\mathsf{lb} < \sseq(i) < \mathsf{ub}$ \label{alg:actual-lis:sigmaprime}
    \Statex \Comment{\textcolor{gray}{$\sseq'$ is not explicitly computed but provided by ignoring the irrelevant elements}}
    
    \State compute $\lis(\sseq')$
    
    \If{$\lis(\sseq') \le 3 |\sseq'| / s$}
        \State \textbf{output} a longest increasing subsequence of $\sseq'$\Comment{\textcolor{gray}{Lemma~\ref{lem:base:case}}}
    \Else
        \State find the $k$th element $\sseq'(j)$ of a longest increasing subsequence of $\sseq'$ 
	\label{alg:actual-lis:k}
	\Statex \hspace{2cm} for some $k$ with $\lis(\sseq')/2 \le k < \lis(\sseq')/2 + |\sseq'|/s$
        \State \Call{RecursiveLIS}{$\sseq'\subseq{1,\dots,j-1}$, $\mathsf{lb}$, $\mathsf{\sseq'(j)}$}
        \State \textbf{output} $\sseq'(j)$
        \State \Call{RecursiveLIS}{$\sseq'\subseq{j+1,\dots,|\sseq'|}$, $\sseq'(j)$, $\mathsf{ub}$}
    \EndIf
    \EndProcedure
  \end{algorithmic}
\end{algorithm}

\subsection{Time and space complexity}

In Theorem~\ref{thm:subsequence}, the claimed running time is $O(\frac{1}{s} \cdot n^{2} \log^{2} n)$.
To prove this, we first show that the depth of the recursion is $O(\log n)$.
We then show that the total running time in each recursion level is $O(\frac{1}{s} \cdot n^{2} \log n)$.
The claimed running time is guaranteed by these bounds.

\begin{lemma}
\label{lem:recursion-depth}
Given a sequence $\sseq$, the depth of the recursions invoked
by \textsc{RecursiveLIS} of Algorithm~\ref{alg:actual-lis} is at most $\log_{6/5} \lis(\sseq')$,
where $\sseq'$ is the subsequence of $\sseq$ computed in Line~\ref{alg:actual-lis:sigmaprime}.
\end{lemma}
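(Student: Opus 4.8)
The plan is to track how the length of the increasing subsequence drops along each branch of the recursion. Fix one invocation of \textsc{RecursiveLIS} with associated subsequence $\sseq'$ and $\lis(\sseq') = k$. If $k \le 3|\sseq'|/s$ we hit the base case and stop, so assume otherwise. We find the $j$ achieving the $k'$th element of a longest increasing subsequence for some $k'$ with $k/2 \le k' < k/2 + |\sseq'|/s$, and then recurse on $\sseq'_L$ and $\sseq'_R$. By Lemma~\ref{lem:recurse} we have $\lis(\sseq'_L) = k' - 1$ and $\lis(\sseq'_R) = k - k'$. The point is that both of these are bounded above by roughly $k/2$: indeed $\lis(\sseq'_L) = k' - 1 < k/2 + |\sseq'|/s$, and $\lis(\sseq'_R) = k - k' \le k - k/2 = k/2$.

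So the main step is to argue $k/2 + |\sseq'|/s \le \tfrac{5}{6} k$, i.e.\ that the additive slack $|\sseq'|/s$ is swallowed by the multiplicative gap between $1/2$ and $5/6$. Here is where the non–base-case hypothesis does the work: since we did not take the base case, $k > 3|\sseq'|/s$, equivalently $|\sseq'|/s < k/3$. Plugging in, $k/2 + |\sseq'|/s < k/2 + k/3 = \tfrac{5}{6}k$. Hence in \emph{both} recursive calls the value of $\lis$ of the argument is strictly less than $\tfrac{5}{6}\lis(\sseq')$ (for $\sseq'_R$ it is at most $k/2 \le \tfrac{5}{6}k$, and strictly less once $k\ge 1$). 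Therefore along any root-to-leaf path the quantity $\lis(\cdot)$ of the current subsequence shrinks by a factor of at least $6/5$ at every step, starting from $\lis(\sseq')$ at the root. After more than $\log_{6/5}\lis(\sseq')$ steps this value would be below $1$, which is impossible for a nonempty subsequence in a non-base-case node; so the recursion depth is at most $\log_{6/5}\lis(\sseq')$.

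I would write this up as a short induction (or equivalently a direct telescoping bound) on the recursion: the claim is that a node at depth $d$ has $\lis$ of its associated subsequence at most $(5/6)^d \lis(\sseq')$; the inductive step is exactly the inequality chain above, and the base case $d=0$ is immediate. The depth bound then follows by observing that a non-leaf node must have $\lis$ at least $2$ (if $\lis(\sseq')\le 1$ the base-case test $\lis(\sseq') \le 3|\sseq'|/s$ is satisfied, since $|\sseq'|\ge \lis(\sseq')$ and $s\le n$ force $3|\sseq'|/s \ge 1$ whenever $\sseq'$ is nonempty), so the recursion cannot continue once the bound $(5/6)^d\lis(\sseq')$ drops below $2$, let alone below $1$.

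The only mildly delicate point — and the place I would be most careful — is making sure the strict inequality is genuinely strict and that the corner cases ($\sseq'$ of length $0$ or $1$, $k'$ at the boundary of its allowed range) are handled so that the base case always catches small instances; everything else is the elementary arithmetic $1/2 + 1/3 = 5/6$ combined with the case hypothesis $\lis(\sseq') > 3|\sseq'|/s$. No heavier machinery is needed.
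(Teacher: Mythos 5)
Your proposal is correct and follows essentially the same route as the paper's proof: both hinge on the observation that the non--base-case hypothesis $\lis(\sseq') > 3|\sseq'|/s$ turns the additive slack $|\sseq'|/s$ into a multiplicative factor, giving $\ell/2 + |\sseq'|/s < \ell/2 + \ell/3 = 5\ell/6$ for each branch, and both conclude by induction (the paper inducts on $\lis(\sseq')$; your telescoping along a root-to-leaf path is the same argument phrased iteratively). Your extra care about corner cases and strictness is fine but not needed beyond what the paper records.
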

\begin{proof}
We proceed by induction on $\lis(\sseq')$.
If $\lis(\sseq') \le 3 |\sseq'|/s$, then no recursive call occurs,
and hence the lemma holds.
In the following, we assume that $\lis(\sseq') = \ell > 3 |\sseq'|/s$
and that the statement of the lemma is true for any sequence $\sseq''$ with $\lis(\sseq'') < \ell$.

Since $\ell > 3 |\sseq'|/s$, we recurse into two branches on subsequences of $\sseq'$.
From the definition of $k$ in Line~\ref{alg:actual-lis:k} of Algorithm~\ref{alg:actual-lis},
the length of a longest increasing subsequence is less than $\ell/2 + |\sseq'|/s$ in each branch.
Since $\ell/2 + |\sseq'|/s < \ell/2 + \ell/3 = 5\ell/6$,
each branch invokes recursions of depth at most $\log_{6/5} (5\ell/6) = \log_{6/5} \ell - 1$.
Therefore the maximum depth of the recursions invoked by their parent is at most $\log_{6/5} \ell$.
\end{proof}

\begin{lemma}
Given a sequence $\sseq$ of length $n$, 
the total running time at each depth of recursion excluding further recursive calls 
in Algorithm~\ref{alg:actual-lis} takes $O(\frac{1}{s} n^{2} \log n)$ time.
\end{lemma}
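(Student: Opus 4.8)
The plan is to bound, at a fixed recursion depth $d$, the sum over all recursive calls at that depth of the cost of the work done locally, and show this sum is $O(\frac{1}{s} n^2 \log n)$. The local work in a call \textsc{RecursiveLIS}$(\sseq', \mathsf{lb}, \mathsf{ub})$ consists of: (i) computing $\lis(\sseq')$ via Theorem~\ref{thm:length} applied to the filtered subsequence; (ii) in the base case, invoking Lemma~\ref{lem:base:case}; (iii) in the recursive case, finding a near-mid element via the preceding lemma. Since $\sseq'$ is provided implicitly by scanning $\sseq$ and ignoring elements outside $(\mathsf{lb},\mathsf{ub})$, each of these subroutines runs on a scan of length $n$ (the full input), not $|\sseq'|$; so a single call at depth $d$ costs $O(\frac{1}{s} n^2 \log n)$ by Theorem~\ref{thm:length} and the near-mid lemma, and the base case costs $O(\lis(\sseq') \cdot n \log \lis(\sseq')) = O(\frac{|\sseq'|}{s} \cdot n \log n) \cdot O(1)$-ish — wait, that already hints at the real issue.

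The subtlety, and the main obstacle, is that there can be up to $2^d$ calls at depth $d$, and if each costs $O(\frac{1}{s} n^2 \log n)$ the bound blows up. So first I would observe that the subsequences $\sseq'$ handled by the distinct calls at a fixed depth $d$ are \emph{disjoint} as subsequences of $\sseq$ — each call peels off a middle element and partitions the remaining relevant elements into a "smaller-and-earlier" part and a "larger-and-later" part, and Lemma~\ref{lem:recurse}'s construction keeps these index sets disjoint. Hence $\sum_{\text{calls at depth } d} |\sseq'| \le n$. The key point is then to re-examine the cost of each subroutine when run on a call whose relevant subsequence has length $m := |\sseq'|$: rather than charging a flat $O(\frac{1}{s} n^2 \log n)$, one should observe that the filtering scan, although it passes over all $n$ input positions, does genuine pile-maintenance work only on the $m$ relevant elements, and the number of phases in the length algorithm is $O(\lis(\sseq')/s) = O(m/s)$, each phase being a scan costing $O(n \log n)$ for the traversal plus the filtering overhead. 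So a call with relevant length $m$ costs $O(\frac{m}{s} \cdot n \log n)$ — hmm, but the scan itself over $n$ positions in each of $O(m/s)$ phases is unavoidable, giving $O(\frac{m}{s} n \log n)$. Summing over a depth level: $\sum_m O(\frac{m}{s} n \log n) = O(\frac{n}{s} \cdot n \log n) = O(\frac{1}{s} n^2 \log n)$, as claimed. For the base-case calls, Lemma~\ref{lem:base:case} gives $O(k \cdot n \log k)$ with $k = \lis(\sseq') \le 3m/s$, so again $O(\frac{m}{s} n \log n)$ per call, and the same telescoping over disjoint $m$'s applies.

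Concretely, the steps I would carry out are: (1) state and prove the disjointness claim — for the two child calls of a node, the index sets of their relevant subsequences are subsets of $\{1,\dots,j-1\}$ and $\{j+1,\dots,|\sseq'|\}$ respectively and are further restricted by the value bounds, hence disjoint and disjoint from $\{j\}$; by induction all relevant subsequences at a common depth have pairwise disjoint index sets, so their lengths sum to at most $n$; (2) bound the cost of a single call with relevant length $m$ by $O(\frac{m}{s} n \log n + n \log n)$, splitting into the three subroutine types, using Theorem~\ref{thm:length} (noting its running time is $O(\frac{1}{s}\lis^2\log n)$ but more precisely $O(\lis/s)$ phases of $O(n\log n)$ each, so $O(\frac{\lis}{s} n \log n)$ with $\lis \le m$), the near-mid lemma similarly, and Lemma~\ref{lem:base:case} for the base case; (3) sum over all calls at depth $d$: the $O(\frac{m}{s} n \log n)$ terms telescope against $\sum m \le n$, and the additive $O(n \log n)$ per call needs the extra observation that the number of calls at depth $d$ is at most $\ldots$ — this is the one remaining gap, since $2^d$ could exceed $n$; I would resolve it by noting a call only recurses when $\lis(\sseq') > 3m/s \ge 3\sqrt{n}/s \cdot \sqrt{n}/\sqrt{n}$, i.e. a recursing call has $m = |\sseq'| > s \ge \sqrt{n}$, so at any depth there are at most $n/s \le \sqrt{n}$ recursing calls, whence at most $2n/s$ calls total at depth $d$, contributing $O(\frac{n}{s} n \log n)$ from the additive terms as well. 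Combining (1)–(3) yields the per-level bound $O(\frac{1}{s} n^2 \log n)$.

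The hard part will be making precise the claim that a call on relevant subsequence of length $m$ costs only $O(\frac{m}{s}\, n \log n)$ rather than the worst-case $O(\frac{1}{s} n^2 \log n)$ from Theorem~\ref{thm:length}: this requires going back into the proof of Theorem~\ref{thm:length} and observing that the number of phases is $O(\lis(\sseq')/s)$, and that $\lis(\sseq') \le m = |\sseq'|$, so the running time is really $O(\frac{\lis(\sseq')}{s} n \log n) \le O(\frac{m}{s} n \log n)$. Everything else is bookkeeping: disjointness of the index sets at a level, the telescoping sum $\sum m \le n$, and the bound on the number of recursing calls coming from the base-case threshold. If the refined phase-count bound on Theorem~\ref{thm:length} is awkward to extract, an alternative is to bound the number of calls at each depth by $O(n/s)$ (as in step 3) and charge each the full $O(\frac{1}{s} n^2 \log n)$ — but that gives $O(\frac{1}{s^2} n^3 \log n)$ per level, which is too weak; so the phase-refinement really is the crux and cannot be avoided.
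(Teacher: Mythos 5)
Your core accounting---pairwise disjointness of the relevant subsequences at a fixed recursion depth, a per-call charge proportional to the call's own size divided by $s$, and a telescoping sum---is the same idea as the paper's proof, which simply charges each call $O(\frac{1}{s}|\sseq'|^{2}\log|\sseq'|)$ (Theorem~\ref{thm:length} applied to $\sseq'$ viewed as an input of length $|\sseq'|$) and concludes via $\sum_{\sseq'}|\sseq'|^{2}\le n^{2}$. Your refinement of the phase count (at most $O(\lis(\sseq')/s)$ passes, each a single scan) is correct and is exactly why Theorem~\ref{thm:length} yields a size-sensitive per-call bound. But your step (3) contains a genuine error: from ``a call recurses only when $\lis(\sseq')>3|\sseq'|/s$'' you infer $|\sseq'|>s$, whereas $\lis(\sseq')\le|\sseq'|$ turns that inequality into $s>3$, which is vacuous; for instance a call with $|\sseq'|=1$ recurses whenever $s>3$. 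Hence the number of calls at a given depth is not $O(n/s)$ and can a priori be as large as $\Theta(n)$. Under your cost model, in which every call pays an additive $\Omega(n\log n)$ to scan the entire input, a level could then cost $\Omega(n^{2}\log n)$, which exceeds the target $O(\frac{1}{s}n^{2}\log n)$ for all $s\ge\sqrt{n}$. So the additive term is not controlled by your argument.

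The repair is not to bound the number of calls but to drop the over-pessimistic assumption that every call scans all $n$ positions. Each node of the recursion tree of Algorithm~\ref{alg:actual-lis} corresponds to a \emph{contiguous} index interval of the original array: the first argument of each recursive call is a prefix or a suffix of the parent's $\sseq'$, hence lies within a contiguous range of positions of $\rho$, and the value bounds $(\mathsf{lb},\mathsf{ub})$ only discard elements \emph{inside} that range. At a fixed depth these intervals are pairwise disjoint, so a single pass of any subroutine in the $i$th call costs $O(n_{i}\log n)$ with $\sum_{i}n_{i}\le n$, not $O(n\log n)$. With this correction your own bookkeeping closes: the multiplicative part is $\sum_{i}O(\frac{\lis_{i}}{s}\,n_{i}\log n)\le\sum_{i}O(\frac{n_{i}^{2}}{s}\log n)=O(\frac{1}{s}n^{2}\log n)$, and the additive part is $\sum_{i}O(n_{i}\log n)=O(n\log n)$, with no bound on the number of calls needed. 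This is, in effect, the observation the paper's one-line proof is implicitly relying on when it writes the per-call cost as a function of $|\sseq'|$ alone.
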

\begin{proof}
In one recursion level, we have many calls of \textsc{RecursiveLIS}
on pairwise non-overlapping subsequences of $\sseq$.
For each subsequence $\sseq'$, 
the algorithm spends time $O(\frac{1}{s} |\sseq'|^{2} \log |\sseq'|)$.
Thus the total running time at a depth is 
$O(\sum_{\sseq'} \frac{1}{s} |\sseq'|^{2} \log |\sseq'|)$,
which is $O(\frac{1}{s} n^{2} \log n)$ since $\sum_{\sseq'} |\sseq'|^{2} \le |\sseq|^{2} = n^{2}$.
\end{proof}

Finally we consider the space complexity of Algorithm~~\ref{alg:actual-lis}.
\begin{lemma}
Algorithm~\ref{alg:actual-lis} uses $O(s \log n)$ bits of working space on sequences of length~$n$.
\end{lemma}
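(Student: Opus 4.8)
The plan is to bound the space used by a single invocation of \textsc{RecursiveLIS} and then multiply by the recursion depth, since the call stack holds one active invocation per level. First I would observe that, by Lemma~\ref{lem:recursion-depth}, the depth of recursion is $O(\log \lis(\sseq)) = O(\log n)$. At each level exactly one recursive call is active (the two child calls in the \textbf{else} branch are made sequentially, not in parallel), so the total space is the sum over the $O(\log n)$ active frames of the working space used locally by each frame, plus whatever must be retained to reconstruct the subsequence $\sseq'$ passed to each frame.

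Next I would analyze the local space of one frame of \textsc{RecursiveLIS}$(\sseq,\mathsf{lb},\mathsf{ub})$. The subsequence $\sseq'$ is never materialized: it is specified implicitly by the pair of bounds $(\mathsf{lb},\mathsf{ub})$ together with the index window inherited from the parent, so representing it costs only $O(\log n)$ bits. Computing $\lis(\sseq')$ invokes the algorithm of Theorem~\ref{thm:length}, which uses $O(s\log n)$ bits; the base-case output step invokes Lemma~\ref{lem:base:case} with $k = \lis(\sseq') \le 3|\sseq'|/s \le 3s$ (using $|\sseq'|\le n$ and $s\ge\sqrt n$), costing $O(k\log n) = O(s\log n)$ bits; and the near-mid element subroutine (the preceding lemma) also uses $O(s\log n)$ bits. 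Since these subroutine calls happen one after another and their workspace can be reused, a single frame uses $O(s\log n)$ bits for computation, of which only $O(\log n)$ bits — the index $j$, the value $\sseq'(j)$, and the bounds — need to persist while the child calls run.

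Combining, the persistent data on the stack is $O(\log n)$ bits per level times $O(\log n)$ levels, i.e.\ $O(\log^{2} n)$ bits, and this is dominated by the $O(s\log n)$ bits of scratch space that the single currently-executing frame needs for its subroutine calls (note $s\ge\sqrt n$ gives $s\log n = \Omega(\sqrt n\log n)$, which swamps $\log^2 n$). Hence the total working space is $O(s\log n) + O(\log^{2} n) = O(s\log n)$ bits.

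The main point requiring care is the accounting of what is ``live'' across recursive calls: one must check that when \textsc{RecursiveLIS} descends into $\sseq'\subseq{1,\dots,j-1}$, none of the $O(s\log n)$-bit structures built while computing $\lis(\sseq')$ or locating the near-mid element are still needed — only $\sseq'(j)$, the split index, and the two bounds — so that the heavy workspace can be freed and reused by the child. I would also verify the boundary bookkeeping that lets each frame reconstruct its own $\sseq'$ from just $(\mathsf{lb},\mathsf{ub})$ and an index range without storing the range explicitly as a list; this follows because the ranges passed down, $\subseq{1,\dots,j-1}$ and $\subseq{j+1,\dots,|\sseq'|}$, are contiguous and can be encoded by their two endpoints in $O(\log n)$ bits.
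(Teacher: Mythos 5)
Your proposal is correct and follows essentially the same argument as the paper: each subroutine call uses $O(s\log n)$ bits that can be discarded and reused before the next call, only $O(\log n)$ bits persist per recursion frame, and the $O(\log n)$ recursion depth from Lemma~\ref{lem:recursion-depth} gives an $O(\log^{2} n)$-bit stack, which is dominated by $O(s\log n)$ since $s \ge \sqrt{n}$. Your additional care about what is live across calls and how $\sseq'$ is encoded implicitly by bounds and endpoints just makes explicit what the paper states more tersely.
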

\begin{proof}
We have already shown that each subroutine uses $O(s \log n)$ bits.
Moreover, this space of working memory can be discarded before another subroutine call occurs.
Only a constant number of $O(\log n)$-bit words are passed to the new subroutine call.
We additionally need to remember the stack trace of the recursion.
The size of this additional information is bounded by $O(\log^{2} n)$ bits
since each recursive call is specified by a constant number of $O(\log n)$-bit words
and the depth of recursion is $O(\log n)$ by Lemma~\ref{lem:recursion-depth}.
Since $\log^{2} n \in O(s \log n)$ for $s \ge \sqrt{n}$, the lemma holds.
\end{proof}


\section{Lower bound for algorithms with sequential access}
\label{sec:lb}

An algorithm is a \emph{sequential access} algorithm
if it can access elements in the input array only sequentially. In our situation this means that for a given sequence, accessing the~$i$th element of the sequence directly after having accessed the~$j$th element of the sequence costs time at least linear in~$|i-j|$. As opposed to the RAM, any Turing machine in which the input is given on single read-only tape has this property. 
Note that any lower bound for sequential access algorithms in an asymptotic form is applicable to multi-pass algorithms as well
since every multi-pass algorithm can be simulated by a sequential access algorithm with the same asymptotic behavior.
Although some of our algorithms are not multi-pass algorithms,
it is straightforward to transform them to sequential access algorithms with the same time and space complexity.

To show a lower bound on the running time of sequential access algorithms with limited working space,
we need the concept of communication complexity (see \cite{KushilevitzN1997commucomp} for more details).
Let $f$ be a function.
Given $\alpha \in \mathcal{A}$ to the first player Alice and $\beta \in \mathcal{B}$ to the second player Bob,
the players want to compute $f(\alpha, \beta)$ together by sending bits to each other (possibly multiple times).
The communication complexity of $f$ is the maximum number of bits transmitted between Alice and Bob
over all inputs by the best protocol for $f$.

Consider the following variant of the LIS problem: Alice gets the first half of a permutation
$\pi$ of $\{1,\dots,2n\}$ and Bob gets the second half. They compute $\lis(\pi)$ together.
It is known that this problem has high communication 
complexity~\cite{Liben-NowellVZ2006lis,GopalanJKK2007estimating,SunW2007communication}.
\begin{proposition}
[\cite{GopalanJKK2007estimating,SunW2007communication}]
\label{prop:comlb'}
Let $\pi$ be a permutation of $\{1,\dots,2n\}$.
Given the first half of $\pi$ to Alice and the second half to Bob,
they need $\Omega(n)$ bits of communication to compute $\lis(\pi)$ in the worst case
(even with 2-sided error randomization).
\end{proposition}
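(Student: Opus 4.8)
The plan is to derive the $\Omega(n)$ bound by a reduction from set disjointness, $\mathrm{DISJ}_m$, which is known to have randomized (two-sided bounded-error) communication complexity $\Omega(m)$ (Kalyanasundaram and Schmidt; Razborov). I would instantiate it with $m=\Theta(n)$: Alice holds $a\in\{0,1\}^m$, Bob holds $b\in\{0,1\}^m$, and they must decide whether $\{i:a_i=1\}\cap\{i:b_i=1\}=\emptyset$. The goal is to convert any protocol that computes $\lis(\pi)$, when the first half of a permutation $\pi$ of $\{1,\dots,2n\}$ is given to Alice and the second half to Bob, into a $\mathrm{DISJ}_m$ protocol with the same communication cost and the same error.

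First I would pin down the reduction map. Since Alice must write down $\pi(1),\dots,\pi(n)$ from $a$ alone and Bob must write down $\pi(n+1),\dots,\pi(2n)$ from $b$ alone, the \emph{set} of values appearing in Alice's half is forced to be fixed in advance (Bob cannot learn it), and likewise for Bob's; the natural choice that makes these partition $\{1,\dots,2n\}$ while still interleaving in magnitude is to let Alice always use the odd values $\{1,3,\dots,2n-1\}$ and Bob the even values $\{2,4,\dots,2n\}$, each arranging its own values in an order dictated by its input. Because Alice's positions entirely precede Bob's, every increasing subsequence of $\pi$ consists of an increasing run inside Alice's half ending at some value $v$, followed by an increasing run inside Bob's half using only values exceeding $v$; hence $\lis(\pi)$ equals the maximum over thresholds $v$ of (the longest increasing run in Alice's sequence with all values $\le v$) plus (the longest increasing run in Bob's sequence with all values $>v$).

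The crux is to choose the two orderings so that this quantity \emph{separates} the two cases, say equal to a fixed $L$ when $a$ and $b$ are disjoint and at most $L-1$ otherwise. I would use a per-coordinate gadget: block $i$ of Alice's sequence is a short ascending-or-descending pattern selected by $a_i$, and symmetrically block $i$ of Bob's sequence is selected by $b_i$, laid out so that a conflict $a_i=b_i=1$ forces any long increasing subsequence to lose one element at block $i$, while in the all-disjoint case a single chain threads through every block without loss. Making this precise requires a monotonicity/locality lemma: an optimal increasing subsequence may be assumed to cross the gadget blocks ``in order'' and to interact with each block in one of a few canonical ways, so that $\lis(\pi)$ decomposes into a sum of per-block contributions plus a controlled boundary term. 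This decomposition step is where I expect the real work to be: one has to rule out increasing subsequences that exploit the value-interleaving to ``cheat'' across gadget boundaries, and one has to verify that the intended additive-constant gap is not eroded.

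Finally, with such a reduction in hand the conclusion is immediate: the map $(a,b)\mapsto(\text{Alice's half},\text{Bob's half})$ is computed by each player with no communication, so any (randomized) protocol computing $\lis(\pi)$ yields a (randomized) protocol for $\mathrm{DISJ}_m$ with identical communication cost and error probability, and the $\Omega(m)=\Omega(n)$ lower bound for $\mathrm{DISJ}$ transfers, giving Proposition~\ref{prop:comlb'}. I would also note that, since the two cases differ only by an additive constant in the value of $\lis$, the argument shows hardness of computing $\lis$ \emph{exactly} (equivalently, of $(1+o(1))$-approximating it), which is precisely the setting relevant to the algorithms in the rest of the paper.
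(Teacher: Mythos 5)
First, a remark on the comparison you asked for: the paper does \emph{not} prove Proposition~\ref{prop:comlb'} at all --- it imports it as a black box from \cite{GopalanJKK2007estimating,SunW2007communication} --- so there is no in-paper proof to measure your argument against. Your high-level strategy (reduce $\mathrm{DISJ}_m$ with $m=\Theta(n)$ to the two-party LIS problem and invoke the Kalyanasundaram--Schmidt/Razborov bound) is in fact the route taken in the cited sources, and your final transfer step is unobjectionable: the input maps are computed locally with no communication, so cost and error carry over verbatim.

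The problem is that everything between ``reduce from disjointness'' and ``transfer the bound'' is missing, and that middle part is the entire content of the result. You correctly record the key structural fact --- since Alice's positions all precede Bob's, $\lis(\pi)=\max_v\bigl(\alpha(v)+\beta(v)\bigr)$, where $\alpha(v)$ is the LIS of Alice's half among values $\le v$ (non-decreasing, a function of $a$ alone) and $\beta(v)$ is the LIS of Bob's half among values $>v$ (non-increasing, a function of $b$ alone) --- but you do not confront what it implies: only \emph{one} threshold is ever used, so a conflict at coordinate $i$ can influence the answer only if the optimum is forced to cross at block $i$, and independent local gadgets whose per-block losses accumulate along ``a single chain threading through every block'' cannot exist in this max-of-two-sums form. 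A working construction must make the per-coordinate signal \emph{non-cumulative}: one needs, say, $\alpha(v_i)=2i+a_i$ and $\beta(v_i)=2(m-i)+b_i$ exactly (not $2i+\sum_{j\le i}a_j$, which is what naive left-to-right blocks produce), so that $\lis(\pi)=2m+\max_i(a_i+b_i)$ and the disjoint/intersecting cases are separated by one. Realizing $\alpha(v_i)=2i+a_i$ requires something like a backbone chain of two elements per value block together with ``bonus'' elements placed in reversed value order at the end of Alice's half, so that at most one bonus is ever usable and it cannot be continued into later blocks; the monotonicity of $\alpha$ and $\beta$ must be checked at every intermediate threshold, not just the designated ones, and your fixed odd/even value split has to be abandoned in favour of interleaved value blocks shared between the players (plus padding to make $\pi$ a genuine permutation of $\{1,\dots,2n\}$ of the prescribed length). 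None of this is routine verification of a stated lemma --- the ``monotonicity/locality lemma'' you defer is trivially the threshold formula you already wrote, while the actual gadget design and the case analysis ruling out cheating thresholds, which you leave entirely open, is the proof. As written, the proposal is a plausible plan whose core is absent and whose guiding intuition (local, additive per-block losses) would have to be replaced before it could be completed.
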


Now we present our lower bound.
Note that the lower bound even holds for the special case where input is restricted to permutations.
\begin{theorem}
\label{thm:lowerbound}
Given a permutation $\pi$ of $\{1,\dots,4n\}$,
any sequential access (possibly randomized) algorithm computing $\lis(\pi)$ using $b$ bits
takes $\Omega(n^{2}/b)$ time.
\end{theorem}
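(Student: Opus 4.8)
The plan is to reduce from the two-party communication problem of Proposition~\ref{prop:comlb'} via a standard simulation argument, using the fact that a sequential access algorithm which makes few time steps cannot cross the midpoint of its input tape too often. First I would set up the reduction: given an instance of the Alice--Bob LIS problem on a permutation $\pi$ of $\{1,\dots,2n\}$, we regard $\pi$ as an input to a hypothetical sequential access algorithm $A$ for LIS (padding up to length $4n$ is a harmless technicality — one can embed the $2n$-element instance into a $4n$-element permutation without changing its LIS, e.g.\ by interleaving a fixed decreasing block, or simply observe that the theorem is stated with $4n$ precisely so that the two halves of the communication instance sit on $2n$ positions each). Alice holds the portion of the tape corresponding to the first half of $\pi$ and Bob the second half, and they agree to jointly simulate $A$.

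The key steps, in order: (1) Fix the ``frontier'' to be the boundary between Alice's cells and Bob's cells on the read-only input tape. Since $A$ is a sequential access algorithm, moving the head from position $j$ to position $i$ costs at least $|i-j|$ time, so in a run of total time $T$ the head crosses the frontier at most $O(T/n)$ times (each crossing from deep in one half to deep in the other is cheap only if the head is already near the frontier, but to be near the frontier it must have travelled there; a clean way to say this is that the number of crossings is at most $T/1 = T$ trivially, and we want the sharper bound — actually the simplest correct accounting is: partition time into the crossings; between any full traversal the head covers $\Omega(n)$ cells only if it goes deep, but crossings near the frontier are not individually expensive). I would therefore instead argue more carefully via a cut at a well-chosen position: by an averaging argument there is a position $p$ within the middle third of the tape such that the head crosses $p$ at most $O(T/n)$ times, because the total head movement is at most $T$ and it is spread over $\Omega(n)$ candidate cut positions. (2) Alice and Bob simulate $A$ by passing control: whenever the head is in Alice's region she runs $A$ locally, and when it crosses $p$ she sends Bob the entire contents of the working memory, namely $O(b)$ bits, plus the head position and internal state, $O(\log n)$ bits; Bob continues symmetrically. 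Whoever finishes outputs the answer and sends one final bit. (3) The total communication is $O(T/n)\cdot(b + \log n) = O((T b)/n)$ bits (absorbing the $\log n$ since $b \ge \log n$ is WLOG — an algorithm needs at least that much to index the input). By Proposition~\ref{prop:comlb'} this must be $\Omega(n)$, hence $T b / n = \Omega(n)$, i.e.\ $T = \Omega(n^2/b)$. The randomized case is identical since the proposition holds against $2$-sided error randomized protocols and the simulation turns a randomized algorithm into a randomized protocol with the same error (shared or private coins both work, as Alice can send her coin tosses or we use a private-coin model throughout).

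The main obstacle I expect is making step~(1) fully rigorous: bounding the number of frontier crossings by $O(T/n)$ rather than the trivial $O(T)$. The honest argument is a potential/averaging one — I would formalize ``sequential access'' so that it directly gives this (the excerpt's definition, ``accessing the $i$th element directly after the $j$th costs $\Omega(|i-j|)$'', does not by itself prevent many cheap crossings of a single fixed boundary, so one genuinely needs to choose the cut $p$ adaptively). Concretely: the head makes a sequence of moves whose absolute displacements sum to at most $T$; for each of the $n/3$ cut positions in the middle third, the number of times the head crosses it is some $X_p$, and $\sum_p X_p \le \sum (\text{displacements}) \le T$ since each unit move contributes to crossing at most one cut; hence some $p$ has $X_p \le 3T/n$. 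Fixing that $p$ as the Alice/Bob boundary (which requires the communication instance to be robust to where exactly we split $\pi$ — true here, since Proposition~\ref{prop:comlb'} is about an essentially arbitrary balanced split and we can arrange the hard instances to live on the middle third, or re-derive the $\Omega(n)$ bound for any split within the middle third, which the cited reductions already give). Everything else — the simulation, the bit accounting, the contradiction with Proposition~\ref{prop:comlb'} — is routine once this crossing bound is in place.
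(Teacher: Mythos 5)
Your overall strategy---simulate the sequential access algorithm by a two-party protocol and invoke Proposition~\ref{prop:comlb'}---is the same as the paper's, and you correctly identify the crux: a single fixed boundary can be crossed many times at unit cost each, so naive accounting only yields $O(Tb)$ bits of communication and hence only $T\in\Omega(n/b)$. But your proposed fix has a genuine gap. The averaging argument ``$\sum_p X_p \le T$, hence some cut $p$ has $X_p \le 3T/n$'' is carried out for one run on one input, so the good cut $p$ depends on the input. A communication protocol must have its input partition fixed \emph{before} the input is seen; you cannot ``fix that $p$ as the Alice/Bob boundary'' per input. For any fixed cut there may be inputs on which the head oscillates across that cut $\omega(T/n)$ times, and on those inputs your protocol communicates too much, so you do not obtain a worst-case $O(Tb/n)$-bit protocol to contradict Proposition~\ref{prop:comlb'}. (One could try to rescue this distributionally---choose the cut that is good on average over a hard distribution and abort when communication exceeds a threshold---but that needs a distributional restatement of the lower bound, a robustness claim for off-center splits, and an error analysis, none of which you supply.)

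The paper sidesteps this entirely, and the $4n$ in the statement is not the ``harmless padding technicality'' you take it for: it is the mechanism. The permutation is built as the concatenation of $\pi'_1$, $\pi'_2$, $\pi'_3$, $\pi'_4$, where $\pi'_1,\pi'_3$ are the two halves of the $2n$-element instance $\pi'$, the block $\pi'_2=\seq{4n,4n-1,\dots,2n+2}$ is a \emph{fixed decreasing block of length $2n-1$ inserted between them}, and $\pi'_4=\seq{2n+1}$ ensures $\lis(\pi)=\lis(\pi')+1$. Alice is given positions $1$ through $3n-1$ and Bob positions $n+1$ through $4n$; they overlap on $\pi'_2$, which both players know for free. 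A handoff is needed only when the head travels from $\pi'_1$ to $\pi'_3$ or back, and every such traversal spans the entire middle block and therefore costs $\Omega(n)$ time \emph{on every input}, with an input-independent partition. This yields $t\in O(T/n)$ handoffs, $tb\in\Omega(n)$ by Proposition~\ref{prop:comlb'}, and $T\in\Omega(n^2/b)$. You should replace your adaptive-cut step with this construction (or else fully work out the distributional variant you allude to).
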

\begin{proof}
Given an arbitrary~$n>1$, let~$\pi'$ be a permutation of~$\{1,\dots,2n \}$.
We construct a permutation~$\pi$ of~$\{1,\dots,4n\}$ as follows. 
Let~$\pi'_1 = \seq{\pi(1),\dots,\pi(n)}$ be the first half of~$\pi'$,
define~$\pi'_2 = \seq{4n,4n-1,\dots, 2n+2}$ and
let~$\pi'_3 =\seq{\pi(n+1),\pi(n+2),\ldots,\pi(2n)}$ be the second half of~$\pi'$.
Then we define~$\pi$ to be the concatenation of~$\pi'_1$,~$\pi'_2$,~$\pi'_3$ and
the one element sequence~$\pi'_4 =\seq{2n+1}$, in that order.

It is not difficult to see that~$\pi$ is a permutation and that~$\lis(\pi) = \lis(\pi')+1$.
To see the latter, observe that the concatenation of $\pi'_{2}$ and $\pi'_{4}$ is a decreasing subsequence of $\pi$.
Hence any increasing subsequence of $\pi$ can contain at most one element not in $\pi'$.
On the other hand, any increasing subsequence of $\pi'$ of length $\ell$
can be extended with the element $2n+1$ of $\pi'_{4}$ to an increasing subsequence of $\pi$ of length $\ell + 1$.

We say a sequential access algorithm traverses the middle if it accesses a position in~$\pi'_1$ and then accesses a position in~$\pi'_3$ or vice versa with possibly accessing elements in~$\pi'_2$ but only such elements in meantime. Since each traversal of the middle takes $\Omega(n)$ time, it suffices to show that the number of traversals of the middle is $\Omega(n/b)$.

Suppose we are given a sequential access algorithm $M$ that computes~$\lis(\pi)$ with $t$ traversals of the middle. 
Using $M$, we construct a two-player communication protocol for computing $\lis(\pi')$ with at most $t b$ bits of communication. (A similar technique is described for streaming algorithms in \cite{SunW2007communication}.)

Recall that the first player Alice gets the first half $\pi'_{1}$ of $\pi'$
and the second player Bob gets the second half $\pi'_{3}$ of $\pi'$.
They compute $\lis(\pi')$ together as follows.
\begin{itemize}
  \item Before starting computation, Alice computes $\pi_{A}$ by concatenating $\pi'_{1}$ and $\pi'_{2}$ in that order,
  and Bob computes $\pi_{B}$ by concatenating $\pi'_{2}$, $\pi'_{3}$, and $\pi'_{4}$ in that order.
  
  \item They first compute $\lis(\pi)$ using $M$ by repeating the following phases:
  \begin{itemize}
    \item Alice starts the computation by $M$ and continues while $M$ stays in $\pi\subseq{1,\dots,3n-1} = \pi_{A}$.
    When $M$ tries to access $\pi\subseq{3n,\dots,4n}$, and thus a traversal of the middle occurs,
    Alice stops and sends all $b$ bits stored by $M$ to Bob.

    \item Bob restores the $b$ bits received from Alice to the working memory of $M$
    and continues computation while $M$ stays in $\pi\subseq{n+1,\dots,4n} = \pi_{B}$.
    A traversal of the middle is occurred when $M$ tries to access $\pi\subseq{1,\dots,n}$.
    Bob then stops and sends the $b$ bits currently stored by $M$ back to Alice.
  \end{itemize}
  
  \item When $M$ outputs $\lis(\pi)$ and terminates, the currently active player outputs $\lis(\pi)-1$ as $\lis(\pi')$
  and terminates the protocol.
\end{itemize}
The two players correctly simulate $M$ and, as a result, compute $\lis(\pi')$ together.
Since the algorithm $M$ invokes $t$ traversals, the total number of bits sent is at most $t b$.
Since $tb \in \Omega(n)$ holds by Proposition~\ref{prop:comlb'}, we have $t \in \Omega(n/b)$ as required.
\end{proof}

Recall that our algorithms for the LIS problem
use $O(s \log n)$ bits and runs in $O(\frac{1}{s} n^{2} \log n)$ time for computing the length
and in $O(\frac{1}{s} n^{2} \log^{2} n)$ time for finding a subsequence, where $\sqrt{n} \le s \le n$.
By Theorem~\ref{thm:lowerbound}, their time complexity is
optimal for algorithms with sequential access up to polylogarithmic factors of $\log^{2} n$ and $\log^{3} n$, 
respectively.


\section{Concluding remarks}
Our result raises the following question: ``Do $o(\sqrt{n})$-space polynomial-time algorithms for LIS exist?''
An unconditional `no' answer would be surprising as
it implies $\mathrm{SC} \ne \mathrm{P} \cap \mathrm{PolyL}$,
where $\mathrm{SC}$ (Steve's Class) is the class of problems that can be solved by an algorithm that simultaneously
runs in polynomial-time and polylogarithmic-space~\cite{Cook1979deterministic,Nisan1992RLvsSC}.
A possibly easier question asks for the existence of a log-space algorithm.
For this question, one might be able to give some evidence for a `no' answer by showing
NL-hardness of (a decision version of) LIS.\@

We would like to mention some known results that have a mysterious coincidence in space complexity with our results.
For $(1+\epsilon)$-approximation of $\lis(\pi)$ by one-pass streaming algorithms, it is known that
$O(\sqrt{n / \epsilon} \cdot \log n)$ bits are sufficient~\cite{GopalanJKK2007estimating} 
and $\Omega(\sqrt{n/ \epsilon})$ bits are necessary~\cite{ErgunJ2015monotonicity,GalG2010lower}.
We were not able to find any connection here
and do not claim anything concrete about this coincidence.

To make the presentation simple, we used $n$ to bound $\lis(\sseq)$
in the time complexity analyses of the algorithms.
If we carefully analyze the complexity in terms of $\lis(\sseq)$ instead of $n$ when possible,
we can obtain the following \emph{output-sensitive} bounds.
\begin{theorem}
Let $s$ be an integer satisfying $\sqrt{n} \le s \le n$,
and let $\sseq$ be a sequence of length $n$ with $\lis(\sseq) = k$.
Using $O(s \log n)$ bits of space,
$\lis(\sseq)$ can be computed in $O(\frac{1}{s} \cdot k n \log k)$ time 
and 
a longest increasing subsequence of $\sseq$
can be found in $O(\frac{1}{s} \cdot k n \log^{2} k)$ time.
\end{theorem}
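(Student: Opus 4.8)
The plan is to revisit the two main algorithms (Theorem~\ref{thm:length} and Theorem~\ref{thm:subsequence}) and re-run their complexity analyses with $\lis(\sseq)=k$ kept as a parameter rather than crudely bounded by $n$. First I would handle the length algorithm. In the proof of Theorem~\ref{thm:length}, each pass of Algorithm~\ref{alg:patience-sorting-2} costs $O(n\log n)$ by Lemmas~\ref{lem:ignoring_piles} and~\ref{lem:compute-a-small-pile}, and the index $i$ advances by at least $s$ per pass, giving $\lis(\sseq)/s$ passes. The only place where $n$ entered was this count of passes together with a per-pass $\log n$ factor. So the total is really $O(\tfrac{k}{s}\cdot n\log n)$. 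To shave the logarithm down to $\log k$, I would observe that in every pass only $O(s)$ piles are maintained and the binary searches in Algorithm~\ref{alg:computing-size} run over at most $j-i+1\le 2s$ pile-top entries; since $s$ can always be taken to be at most $\max\{k,\sqrt n\}$ (if $k<\sqrt n$ we are in the regime of Lemma~\ref{lem:base:case} anyway, and if $k\ge\sqrt n$ we may as well cap $s\le k$ because a larger $s$ gives no benefit), each step of the simulation costs $O(\log k)$ rather than $O(\log n)$. This yields the claimed $O(\tfrac{1}{s}\cdot k n\log k)$ bound for the length, while the space is unchanged at $O(s\log n)$ bits because the stored indices still range over $\{1,\dots,n\}$.

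Next I would treat the subsequence algorithm. Here the running time in Theorem~\ref{thm:subsequence} was the product of three factors: the recursion depth $O(\log n)$ (Lemma~\ref{lem:recursion-depth}, actually $\log_{6/5}\lis(\sseq')$), the per-level cost $O(\tfrac1s n^2\log n)$, and nothing else. Lemma~\ref{lem:recursion-depth} already states the depth as $\log_{6/5}\lis(\sseq)\le O(\log k)$, so the depth factor is $O(\log k)$ for free. For the per-level cost, the work on a sub-instance $\sseq'$ is dominated by one invocation of the near-mid subroutine, which by the same output-sensitive analysis of the length algorithm costs $O(\tfrac1s|\sseq'|\,\mathrm{lis}(\sseq')\log\mathrm{lis}(\sseq'))$, plus possibly one application of Lemma~\ref{lem:base:case} at a leaf, costing $O(\mathrm{lis}(\sseq')\cdot|\sseq'|\log\mathrm{lis}(\sseq'))$ which is within the same bound since at a leaf $\mathrm{lis}(\sseq')\le 3|\sseq'|/s$ so the extra $1/s$ is absorbed. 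Summing over the pairwise-disjoint sub-instances $\sseq'$ at one level and using $\mathrm{lis}(\sseq')\le k$ together with $\sum_{\sseq'}|\sseq'|\le n$ gives a per-level bound of $O(\tfrac{k}{s}\cdot n\log k)$. Multiplying by the $O(\log k)$ depth gives $O(\tfrac1s\cdot kn\log^2 k)$, and the space is again untouched.

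The main obstacle I anticipate is the bookkeeping around the parameter $s$ relative to $k$: the statement fixes $s$ with $\sqrt n\le s\le n$ independently of $k$, so I must argue that we are free to internally replace $s$ by $\min\{s,\max\{k,\sqrt n\}\}$ without violating the promised space budget $O(s\log n)$ or hurting the time bound — a larger working-memory allowance never forces a slower algorithm, and when $k$ is tiny the base-case routine of Lemma~\ref{lem:base:case} already achieves the target with only $O(k\log n)\subseteq O(s\log n)$ bits. Once that reduction is in place, replacing $\log n$ by $\log k$ everywhere is just the observation that all data structures in the simulation have size $O(k)$. The rest is a line-by-line re-run of the earlier analyses with $k$ carried symbolically, which I would present compactly rather than in full.
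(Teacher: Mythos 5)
The paper itself gives no proof of this theorem beyond the remark that one should ``carefully analyze the complexity in terms of $\lis(\sseq)$ instead of $n$'', so your plan of re-running the earlier analyses with $k$ carried symbolically is exactly the intended route. For the length bound your argument is sound: the number of passes of Algorithm~\ref{alg:patience-sorting-2} is $O(\lceil k/s\rceil)$, and within a pass every binary search ranges over at most $\min\{2s,k\}$ pile tops, so each pass costs $O(n\log k)$; this gives $O((1+k/s)\,n\log k)$ in total, which matches the claimed bound whenever $k\ge s$ (for $k<s$ the stated bound is sublinear and must implicitly carry the additive $O(n\log k)$ term for a single scan --- a blemish of the theorem statement rather than of your proof).

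The subsequence half, however, has a genuine gap at the leaves of the recursion. You claim the cost $O(\lis(\sseq')\cdot|\sseq'|\log\lis(\sseq'))$ of Lemma~\ref{lem:base:case} is within the target because at a leaf $\lis(\sseq')\le 3|\sseq'|/s$, so ``the extra $1/s$ is absorbed.'' But substituting $\lis(\sseq')\le 3|\sseq'|/s$ turns $\lis(\sseq')\,|\sseq'|$ into $3|\sseq'|^2/s$, and summing $|\sseq'|^2$ over the pairwise disjoint leaves yields $n^2/s$, not $kn/s$; the alternative estimate $\sum_{\sseq'}\lis(\sseq')\,|\sseq'|\le kn$ loses the factor $1/s$ entirely. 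Concretely, take $s=\sqrt n$ and an input with $k=\lis(\sseq)=3n/s=3\sqrt n$: the very first call already passes the base-case test, Lemma~\ref{lem:base:case} is invoked on the whole input at cost $\Theta(kn\log k)=\Theta(n^{1.5}\log n)$, yet the claimed bound is $O(\frac 1s\, kn\log^{2}k)=O(n\log^{2}n)$. The absorption step is thus off by a factor of roughly $s/\log k$, and either an additional idea is needed for the base case or the bound must carry an extra additive term of order $\min\{kn,\,n^{2}/s\}\log k$. The remainder of your outline --- the $O(\log_{6/5}k)$ recursion depth, the per-level accounting of the near-mid computations via $\sum_{\sseq'}|\sseq'|\le n$ and $\lis(\sseq')\le k$, and the unchanged $O(s\log n)$ space --- is correct.
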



\bibliographystyle{plainurl}
\bibliography{lis}

\end{document}